\newtheorem{thm}{Theorem}
\newtheorem{defi}{Definition}
\newtheorem{lem}{Lemma}
\newtheorem{cor}{Corollary}
\newtheorem{rmk}{Remark}
\newcommand{\wz}[1]{{\textcolor{blue}{[wz: #1]}}}
\newcommand{\bj}[1]{{\textcolor{red}{[bj: #1]}}}
\begin{document}
%

\title{When Focus Enhances Utility: Target Range LDP Frequency Estimation and Unknown Item Discovery}



%

\author{\IEEEauthorblockN{Bo Jiang\IEEEauthorrefmark{1}\IEEEauthorrefmark{2}, Wanrong Zhang\IEEEauthorrefmark{1}\IEEEauthorrefmark{2},
Donghang Lu\IEEEauthorrefmark{1}\IEEEauthorrefmark{2}, Jian Du\IEEEauthorrefmark{1}, and
    Qiang Yan\IEEEauthorrefmark{1}.}
\IEEEauthorblockA{\IEEEauthorrefmark{1}
TikTok Inc, \IEEEauthorrefmark{2} Equal Contribution.\\ Email: \{bojiang, wanrongzhang, Donghang.lu, jian.du, yanqiang.mr\}@tiktok.com}}



\maketitle

\begin{abstract}
Local Differential Privacy (LDP) protocols enable the collection of randomized client messages for data analysis, without the necessity of a trusted data curator. Such protocols have been successfully deployed in real-world scenarios by major tech companies like Google, Apple, and Microsoft. In this paper, we propose a Generalized Count Mean Sketch (GCMS) protocol that captures many existing frequency estimation protocols. Our method significantly improves the three-way trade-offs between communication, privacy, and accuracy. We also introduce a general utility analysis framework that enables optimizing parameter designs. {Based on that, we propose an Optimal Count Mean Sketch (OCMS) framework that minimizes the variance for collecting items with targeted frequencies.} Moreover, we present a novel protocol for collecting data within unknown domain, as our frequency estimation protocols only work effectively with known data domain. Leveraging the stability-based histogram technique alongside the Encryption-Shuffling-Analysis (ESA) framework, our approach employs an auxiliary server to construct histograms without accessing original data messages. This protocol achieves accuracy akin to the central DP model while offering local-like privacy guarantees and substantially lowering computational costs.

\end{abstract}


%

\section{Introduction}

Differential Privacy (DP) is a mathematical definition of privacy that provides strong worst-case privacy guarantees for individuals within a dataset, while enabling data analysis. While numerous differentially private algorithms have been created for various analysis tasks, most high-profile real-world applications focus on the Local Differential Privacy (LDP) model. The LDP model is particularly attractive because it does not require trust in a central data curator; instead, it ensures that data is obfuscated at the source, before being collected. Some successful deployments in major tech companies include Google's RAPPOR \cite{erlingsson2014rappor,45382}, a protocol integrated into Chrome, to collect web browsing behavior; Apple \cite{apple2017learning} collecting type history and emojis; and Microsoft \cite{10.5555/3294996.3295115} collecting telemetry across millions of devices.

Frequency estimation, where the goal is to estimate the occurrence of items within a dataset, is a common task in data analytics that has seen various implementations under the LDP model. The process for achieving this usually has several steps: clients first encode their responses (inputs) into a designated format, then perturb these encoded values to generate outputs. These outputs are sent to an aggregator, who collects and decodes the values to estimate the number of clients associated with each specific input. 
 One notable framework is the ``pure" Local Differential Privacy (LDP) framework \cite{wang2017locally}, which includes many existing protocols (e.g., Basic RAPPOR) as special cases. This enables us to precisely analyze and compare the accuracy of different protocols and also offers optimal parameters in the randomization step. However, their framework has limitations in scenarios requiring extensive data collection, particularly within large domains.
Some protocols have incorporated with hash functions to encode data into a more manageable domain size. The hash functions, on one hand, reduce communication cost and enable encoding data from unknown or unlimited domains; on the other hand, they introduce collisions during encoding, where multiple items are mapped to the same hash value. These collisions make the raw data values hard to distinguish, leading to reduced utility in terms of the accuracy of aggregation. To counter the utility reduction caused by collisions, RAPPOR adopts client cohorts and extra LASSO clustering in aggregation \cite{Rappor}. However, it is not straightforward to derive an error bound for utility. Comparably, Apple's Count Mean Sketch (CMS) is more favorable, as it utilizes a statistical data structure to average out other values in the collision and features a closed-form variance expression for the aggregation. However, Apple's CMS has deterministic perturbation/aggregation parameters which are not optimized to balance the accuracy and communication tradeoff. In this paper, we propose a Generalized Count-Mean-Sketch (GCMS) protocol that builds upon and extends the foundational ideas of Apple's CMS.

Our proposed mechanism aims to improve the existing framework by accounting for both the randomness inherent in truthful responses and the complications arising from hash collisions. By doing so, we enhance the accuracy of frequency estimation under the LDP model. One of the key advantages of our GCMS protocol is its ability to achieve a significantly smaller communication cost in terms of plaintext length while maintaining the same level of privacy protection as the original CMS. Furthermore, we provide a utility guarantee for our general CMS protocol, enabling the optimal design of parameters tailored to specific item frequency regimes. This ensures that our protocol is not only more efficient in terms of communication cost but also adaptable to various data collection scenarios to further boost the estimation accuracy. 

To further amplify privacy, we present our protocol within the Encryption-Shuffling-Analysis (ESA) framework. Each client's encoded and randomized output is encrypted, and these encrypted outputs are then shuffled by a trusted shuffler, which involves randomly permuting the data to break the direct link between clients and their responses. This randomness in permutation significantly amplifies privacy by making it even harder to trace any response back to an individual client, enhancing an $\epsilon$-LDP randomizer to $\mathcal{O}\left(1-e^{-\epsilon_0}\sqrt{\frac{e^{\epsilon_0}(1/\delta)}{n}}\right)$-DP.
The server then aggregates the shuffled and encrypted data and performs the analysis. 

Note that these frequency estimation protocols only work effectively when the data domain is known. {There are many scenarios where the server does not have full knowledge of the clients' input dictionary. For example, in word typing, new words are constantly being invented alongside the existing dictionary. Another example is URL collection: whenever new content is uploaded, a new URL is generated. } 
To address this limitation, we also offer a protocol that collects new data within an unknown domain. Few work tackles this problem, but they all come with significant computing/communication cost. Fanti et al. \cite{45382} propose an algorithm using RAPPOR that segments the domain, releases segments, and reconstructs data using an EM algorithm, but faces high computational costs. The private sequence fragment puzzle method  \cite{apple2017learning}, based on CMS, also segments and reconstructs data but struggles with trade-offs between fragment size and privacy leakage. PrivateTrie \cite{8509300} constructs a trie structure for iterative data collection, but its communication overhead scales still quadratically with data length. Therefore, there is a need for more efficient solutions, motivating our proposed protocol.

Our protocols relies on the stability-based histogram technique \cite{korolova2009releasing} for collecting new data. The original stability-based histogram was proposed under the central DP model. To protect against direct data collection and tracking by the server, we implement the ESA framework. We introduce an auxiliary server that only receives encrypted messages and their hashed values to construct the histogram. This setup guarantees that the auxiliary server cannot trace the original messages. Additionally, messages passing through the shuffler are further encrypted using the auxiliary server's public key. Our protocol avoids segmenting the original messages and the computational cost for the reconstruction, thus offering an efficient algorithm that achieves accuracy comparable to the central DP model while providing a local-like privacy guarantee.

In summary, our paper makes the following contributions:

1. We propose a unified framework for LDP frequency estimation. By instantiating the corresponding parameters, our framework can be used to analyze many protocols that falls in the``pure'' Local Differential Privacy framework, as well as Apple's CMS.  Moreover, our design allows the length of the privatized message to be a tunable parameter, which offers flexibility for different communication considerations.
By choosing the appropriate parameter, 
this approach significantly reduces communication costs while maintaining the same level of privacy protection.

2. We provide a general utility guarantee of our framework, beyond which, we propose a parameter optimization algorithm, tailored to specific item frequency regimes.  {In addition, we present a correction to the variance expression in Apple's original paper \cite{apple2017learning}}. Our results also suggest that most of the LDP frequency estimation protocols, such as Apple's CMS, choose suboptimal parameters for the randomization step when there is a target frequency regime.

3. We introduce a protocol collecting data within an unknown domain, that leverages the stability-based histogram technique combined with the Encryption-Shuffling-Analysis (ESA) framework. By incorporating an auxiliary server that constructs the histogram without accessing the original data messages, this protocol achieves accuracy comparable to the central DP model while providing local-like privacy guarantees. It also avoids the segmenting and reconstruction steps in all existing work under LDP, thus significantly reducing the computational cost.

4. We visualize our theoretical analysis and conduct extensive experiments with real data. Our results suggest that: 1. GCMS constantly provides better utility than CMS under the same communication cost and under the same privacy leakage. 2. With optimized mechanism parameters, our OCMS achieves lower variance for specific target frequencies, enabling clients to obtain more precise results for query items within certain frequency ranges.

\section{Preliminaries}
In this section, we introduce several techniques relevant to this paper.
\subsection{Privacy Models}

Differential privacy \cite{DMNS06} is a mathematical guarantee about database privacy.

\begin{defi}[$(\epsilon,\delta)$-DP \cite{DMNS06}]\label{DP:def}
	A randomized algorithm $\mathcal{M}: \mathcal{X} \rightarrow \mathcal{R}$ is $(\epsilon,\delta)$-differentially private if for every pair of neighboring datasets $X,X' \in \mathcal{X}$, and for every event $S \subseteq \mathcal{R}$,
 \begin{equation}\label{eq:DP-def}
  \operatorname{Pr}(\mathcal{M}(X) \in S) \leq e^\epsilon \cdot \operatorname{Pr}(\mathcal{M}(X') \in S)+\delta.
 \end{equation}
 When $\delta$ = 0, we say it is $\epsilon$-DP (or pure DP).
\end{defi}

In central DP model, there is a trusted server collecting raw data from clients and applies a randomized algorithm to the aggregated dataset to produce differentially private outputs. We define the neighboring datasets as the datasets that arbitrarily differ in the values at most one entry. Thus, \eqref{DP:def} guarantees that the presence or absence of an individual's data in a dataset does not significantly affect the outcome of queries, controlled by the privacy parameters $\epsilon$ and $\delta$.

Local Differential Privacy (LDP) \cite{Kasiviswanathan2008WhatCW} removes the need for a trusted servers by ensuring that the privacy of each individual's data is protected before it is collected by the server. In LDP, each client applies a randomized algorithm to their own data locally, and only the perturbed data is sent to the server. Here, we define the neighboring datasets as any pair of values from the input support, and \eqref{DP:def} guarantees that any input values are indistinguishable.



Shuffle Differential Privacy (Shuffle DP) \cite{10.5555/3310435.3310586} is an intermediate model between Central DP and LDP that enhances the privacy guarantees of LDP by adding a shuffling step. In this model, each client first perturbs their data locally (as in LDP) and then submits it to a shuffler, which randomly permutes the data before sending it to the server. 
Feldman et al. \cite{feldman2023stronger} have shown that the random permutation has a strong privacy amplification effect. 

\begin{thm}[Privacy amplification by shuffling \cite{feldman2023stronger}]\label{thm.shuffle}
For local randomizer with $$\epsilon\le \log\left(\frac{n}{8\log(2/\delta)}-1\right),$$for any $\delta\in[0,1]$, shuffling can achieve $(\epsilon_c,\delta)$-DP with 
    \begin{equation}\label{eq:sfDP}
        \epsilon_c\le \log \left(1+(e^{\epsilon}-1)\left(\frac{4\sqrt{2\log(4/\delta)}}{\sqrt{(e^{\epsilon}+1)n}}+\frac{4}{n}\right)\right),
    \end{equation}
where $n$ denotes the number of data items. 
\end{thm}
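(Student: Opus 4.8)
The plan is to follow the ``hiding among the clones'' reduction of Feldman, McMillan, and Talwar together with its tightened refinement. Throughout, write $\epsilon$ for the local privacy level and let $R$ denote the $\epsilon$-LDP local randomizer applied by each client. Fix two neighboring datasets that differ only in the record of a single client, say client $n$, whose input is $x^0$ in one world and $x^1$ in the other. Since the shuffled output is the \emph{multiset} of the $n$ randomized messages, by post-processing it suffices to bound the $(\epsilon_c,\delta)$-indistinguishability of the two multiset distributions, maximized over the worst-case pair $(x^0,x^1)$. A standard stochastic-domination argument then lets me replace $R$ by the canonical three-atom worst-case randomizer supported on $\{x^0\text{-clone},\,x^1\text{-clone},\,\perp\}$, so the analysis reduces to a single symmetric randomized-response primitive.

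First I would establish the blanket/clone decomposition. The $\epsilon$-LDP guarantee $e^{-\epsilon}\le \Pr[R(x^0){=}o]/\Pr[R(x^1){=}o]\le e^{\epsilon}$ implies that each of the $n-1$ unchanged clients can be coupled so that, with probability $p=\tfrac{2}{e^{\epsilon}+1}$, its message is a ``clone'' equally likely to have arisen from $x^0$ or from $x^1$, and with probability $1-p$ it is a message whose law is independent of the changed bit. The latter messages are identically distributed in both worlds, contribute nothing to the privacy loss, and can be discarded by conditioning. Let $C$ be the number of clones among the unchanged clients; then $C\sim \mathrm{Binomial}(n-1,p)$, so $\mathbb{E}[C]\approx \tfrac{2n}{e^{\epsilon}+1}$.

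Next I would collapse the problem to a one-dimensional counting mechanism. Among the $C+1$ clones (the extra one being client $n$, a definite $x^0$-clone in world $0$ and an $x^1$-clone in world $1$), let $A$ be the number of $x^0$-clones. Conditioned on $C$, the two worlds correspond to $A\sim \mathrm{Binomial}(C,\tfrac12)+1$ versus $A\sim \mathrm{Binomial}(C,\tfrac12)$, and the likelihood ratio at the value $a$ equals $a/(C-a+1)$. Hence the privacy loss exceeds a threshold only when $A$ lands in an extreme tail, and the remaining task is to bound both the mass of that tail and the magnitude of the loss on its complement.

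The hard part is the quantitative concentration argument that yields the exact constants and the $(e^{\epsilon}+1)$ dependence. The hypothesis $\epsilon\le \log\!\big(\tfrac{n}{8\log(2/\delta)}-1\big)$ is precisely the statement $\mathbb{E}[C]\ge 16\log(2/\delta)$, which by a multiplicative Chernoff bound forces the lower tail of $C$ below half its mean to carry probability well under $\delta$; I can therefore pin $C$ near $m\approx \tfrac{2n}{e^{\epsilon}+1}$ and absorb the exceptional event into $\delta$. With $C\approx m$, the count $A$ deviates from $m/2$ by $\Theta(\sqrt{m})$, so a Chernoff/Bernstein tail bound evaluated at the threshold scaled by $\sqrt{2\log(4/\delta)}$ controls the region where $\log\!\big(a/(C-a+1)\big)$ exceeds $\epsilon_c$. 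Combining the $\delta$-mass from the lower tail of $C$ with the $\delta$-mass from the binomial tail of $A$, and tracking constants through the envelope $\log\!\big(1+(e^{\epsilon}-1)(\cdots)\big)$ (whose $(e^{\epsilon}-1)$ prefactor converts the $1/\sqrt{(e^{\epsilon}+1)n}$ term into the claimed $\sqrt{e^{\epsilon}+1}\cdot\sqrt{\log(1/\delta)/n}$ behavior), produces the stated factor $\tfrac{4\sqrt{2\log(4/\delta)}}{\sqrt{(e^{\epsilon}+1)n}}+\tfrac{4}{n}$; the additive $4/n$ collects the lower-order corrections from the unit shift and the Gaussian approximation of the binomial. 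Obtaining these constants sharply, rather than a merely order-optimal $\mathcal{O}\!\big(\sqrt{(e^{\epsilon}+1)\log(1/\delta)/n}\big)$ bound, is the delicate step where essentially all of the effort concentrates.
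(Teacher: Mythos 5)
This statement is Theorem~\ref{thm.shuffle}, which the paper does not prove at all: it is imported verbatim from Feldman, McMillan, and Talwar \cite{feldman2023stronger} and used as a black box, so there is no in-paper proof to compare against. Measured against the proof in the cited source, your outline follows exactly the right strategy --- the ``hiding among the clones'' reduction, the collapse to a binomial counting mechanism with likelihood ratio $a/(C-a+1)$ (which you state correctly), and the observation that the hypothesis $\epsilon\le\log\bigl(\tfrac{n}{8\log(2/\delta)}-1\bigr)$ is equivalent to $\mathbb{E}[C]\gtrsim 16\log(2/\delta)$, which is what makes the Chernoff step on $C$ go through.

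Two caveats. First, the step you pass over most quickly is not innocuous: the claim that each unchanged client can be coupled to be a clone with probability $p=\tfrac{2}{e^{\epsilon}+1}$ does not follow from the generic mixture decomposition of an $\epsilon$-LDP randomizer, which only yields clone probability $e^{-\epsilon}$ (mass $\tfrac{1}{2e^{\epsilon}}$ on each of the two clone types). Upgrading to $\tfrac{2}{e^{\epsilon}+1}$ is a refinement specific to the non-adaptive, identical-randomizer setting, and it is precisely what produces the $(e^{\epsilon}+1)$ inside the square root of \eqref{eq:sfDP}; asserting it without proof leaves the stated constants unsupported. Second, you explicitly defer the entire quantitative core --- splitting the failure probability between the lower tail of $C$ and the tail of $A$, and extracting the constants $4\sqrt{2}$ and $4/n$ --- so what you have is a correct roadmap of the known proof rather than a proof. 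For the purposes of this paper, citing \cite{feldman2023stronger} as the authors do is the appropriate resolution.
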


The Encryption-Shuffling-Analysis (ESA) framework builds on the principles of Shuffle DP to further enhance security: clients encrypt their output with the server's public key. Therefore, only the server can decrypt and obtain the plaintext. This model has been widely adopted by Apple\cite{apple2017learning}, Google\cite{Rappor}, and Microsoft\cite{10.5555/3294996.3295115}.



\subsection{Pure LDP Protocol}

The LDP mechanisms designed specifically for frequency estimation are called Frequency Oracles. Oracles vary in their construction, accuracy guarantees, and the size of the domain for which they are best suited. In terms of the design of the frequency oracle, the concept of a ``pure" protocol is introduced in \cite{203872}. A pure protocol requires the probability that any value $d_1$ is mapped to its own support set to be the same for all values. We use $p$ to denote this probability. It also requires a value $d_2\neq d_1$ to be mapped to $d_1$’s support set with probability $q$. This must be the same for all pairs of $d_1, d_2$. 
The pure protocol has the following two major advantages:

\begin{itemize}
    \item The pure protocol simplifies the privacy analysis for LDP. A pure protocol satisfies $\epsilon$-LDP as long as 
$${p}/{q} \le e^{\epsilon}.$$

    \item The pure protocol provides a closed-form estimator for frequency estimation. Let $v_i$ be the privatized data submitted by client $i$. The unbiased estimator for number of times that $d$ occurs is
    \begin{equation*}
        \hat{f}(d) = \frac{\sum_{i=1}^N \mathbbm{1}_{\{d\in support(v_i)\}} - nq}{(p-q)^2}.
    \end{equation*}
    The corresponding variance is
    \begin{equation*}
        \text{Var}[\hat{f}(d)] = \frac{f(d)p(1-p) + (n-f(d))q(1-q)}{(p-q)^2}.
    \end{equation*}
\end{itemize}

\section{Related Works}
In this section, we present related works to this paper, including LDP mechanisms and real-world implementations for frequency estimation, as well as works that provide functions for data collection with unknown domain.


\subsection{LDP Mechanisms for Frequency Estimation}


{Many existing protocols can be viewed as special cases of a pure protocol.}
Let us start with direct perturbation for a basic binary model. Binary Randomized Response (BRR), where $p = \frac{e^{\epsilon}}{e^{\epsilon} + 1}$ and $q = \frac{1}{e^{\epsilon} + 1}$, is proven to be optimal \cite{Extreme_ldp} for binary data collection. General Randomized Response (GRR), also known as Direct Encoding (DE) \cite{203872}, where $p = \frac{e^{\epsilon}}{e^{\epsilon} + |\mathcal{D}| - 1}$ and $q = \frac{1}{e^{\epsilon} + |\mathcal{D}| - 1}$, can handle high-dimensional data for $d \ge 2$, but its utility decreases significantly for large $d$.


Unary encoding (UE) reduces the sensitivity of high-dimensional data by converting a value $d \in \mathcal{D}$ into a binary vector of size $|\mathcal{D}|$, where only the $d$-th bit is $1$, and the rest are $0$s. Symmetric Unary Encoding (SUE) \cite{203872} perturbs each bit of the unary encoded vector independently using $\epsilon/2$ BRR, ensuring a symmetric property $p + q = 1$. However, this symmetry does not minimize variance. Optimized Unary Encoding (OUE) \cite{203872} improves upon SUE by minimizing variance, specially under small $\epsilon$ and low frequencies. While UE-based methods enhance utility for direct encoding, they require each client to send at least $|\mathcal{D}|$ data to the server, increasing communication overhead.


To reduce communication cost,
transformation-based approaches, such as Hadamard Randomized Response (HRR) \cite{apple2017learning} and S-Hist \cite{bassily2015local}, compress high-dimensional data into a smaller domain. Subset Selection (SS) \cite{wang2016mutual} reduces communication by randomly selecting $s$ items from the domain $\mathcal{D}$. Among these, the hashing-based methods are most widely studied.
Hash encoding maps high-dimensional data to a smaller domain \cite{203872}, then uses either direct perturbation or unary encoding to release the hashed data.

Real-world LDP implementations, such as Google's RAPPOR \cite{Rappor}, O-RAPPOR \cite{10.5555/3045390.3045647}, and Apple's Count Mean Sketch \cite{apple2017learning}, usually combine hash-based methods with UE. That is, the data is first hashed to a smaller domain, then released with UE. 
RAPPOR uses cohorts and LASSO regression to reduce information loss from hash collisions, while Apple's Count Mean Sketch uses a probabilistic data structure to average aggregated hashed items. 
With a large number of hash functions, the summation of the counts for the specific hashed values obtained by different hash functions converges to the true count plus a calculable bias.
However, the parameters used in this protocl are suboptimal, both in terms of accuracy and communication cost, as we will show later in this paper.

{There are a line of work focusing on improving the three way tradeoff between accuracy, efficiency, and privacy of the LDP-based frequency estimation frameworks. Projective Geometry Response\cite{feldman2022private} uses geometric transformations for efficient, accurate estimations, while Adaptive Online Bayesian Estimation\cite{aydin2024adaptive} applies Bayesian updates to adjust estimations in real-time, though both can be computationally complex. Wiener Filter-Based Deconvolution  \cite{10179389} enhances noise reduction using filter-based techniques, at the cost of higher computational overhead. Notably, PK-RAPPOR \cite{9826894} incorporates prior knowledge of item frequency rankings, improving accuracy specifically for frequently occurring items. However, obtaining accurate item frequency rankings is challenging in practice, and these rankings can change over time. Nonetheless, the concept of leveraging prior information for improved mechanism design remains a valid and valuable approach, which also motivates our design.}

\subsection{Data Collection with Unknown Domain}

Most of the existing works on LDP for frequency estimation requires that the server knows the data domain $\mathcal{D}$. However, there are cases where the server does not fully know the input dictionary, especially for lengthy strings like URLs or full names. Few works address data collection with an unknown domain. Fanti et al. \cite{45382} propose an algorithm for RAPPOR that splits a message into disjoint segments, releases each segment with RAPPOR, and reconstructs the message using expectation maximization (EM). This process is highly costly due to the exhaustive search needed for EM. A private sequence fragment puzzle method based on CMS \cite{apple2017learning} also uses a segmentation-release-reconstruct approach. This method faces similar issues: small fragment sizes lead to numerous fragments and high reconstruction costs, while large fragment sizes increase privacy leakage.
PrivateTrie \cite{8509300} constructs a trie data structure to store string values iteratively from each client, reducing the reconstruction from a graph to a tree. However, it comes with quadratic communication cost relative to the data length.

\section{Generalized Count Mean Sketch}\label{Sec:OCMS}

We first study the problem of privacy-preserving frequency estimation of clients' private data. We consider a set of $n$ clients, each possessing private data $d_i \in \mathcal{D}$, where $i$ denotes the client index. Each client aims to collaborate with a server to receive some service but does not fully trust the server and will only contribute their data if it is properly privatized and secured. To achieve this, each client submits a privatized and secured version $v_i$ of $d_i$ to the server. The server, requires the following functionalities:
\begin{itemize}
    \item Upon receiving the set $\{v_i\}_{i=1}^n$, the server aims to estimate the number of clients who possess a specific data $d$, denoted as $f(d)$. 
    \item  The server may have complete, partial, or no knowledge of the domain $\mathcal{D}$. Therefore, the server requires some of the popular data strings collected to keep itself up-to-dated.
\end{itemize}

\textbf{Utility Definition:} When performing frequency estimation, the servers hopes that the estimator $\hat{f}(d)(\{v_i\}_{i=1}^n)$ 
has a small mean square error (MSE) with respect to the true frequency $f(d)$:
$$E[(\hat{f}(d) - f(d))^2].$$

Local mechanisms feature lightweight designs, and communication cost and computation cost should be considered as part of the system requirements. We define the communication cost as the number of bits needed to transmit the plaintext message of each client. We note that the actual bits transmitted end-to-end can increase after encryption. However, the length of the plaintext is a more accurate representation of the communication cost across various encryption protocols. Each client's input should not burden the server during aggregation. To this end, the framework should feature low computation cost, which is defined as the computation time at the server per client's message.

Given these considerations,
we propose the Generalized Count Mean Sketch (GCMS) protocol. The term ``generalized" refers to the increased flexibility in designing parameters such as $p$ and $q$, compared to Apple's CMS. Building on this, we propose our optimal Count Mean Sketch (OCMS) that optimizes $p$ and $q$ for specific frequency regimes. We present our protocol within the ESA framework.

\subsection{General Count Mean Sketch (GCMS) Framework}

  \textbf{Preparation:} 
  Before data collection, the server generates $k$ independent hash functions $\mathcal{H} \overset{\Delta}{=} \{h_1,h_2,...,h_k\}$, where each hash function deterministically maps any input to a discrete number in $ [\mathbf{m}]=[1,...,m]$, with $m$ being the hashing range, $\mathcal{H}$ being the universe of hash functions. The server then sends $\mathcal{H}$ and its public key $pk$ to each client.
  
The entire data collection pipeline is illustrated in Figure 1. It consists of three distinct phases, each operating on different platforms:
\begin{enumerate}
    \item The initial phase involves all on-device algorithms, including hash encoding and privatization, and encryption with the server's public key $pk$, as described in Algorithm 1.
    \item The encrypted privatized data is then transmitted through an end-to-end encrypted channel and received by the shuffler. The shuffler then forwards the data to the server after a random shuffling operation.
    \item Finally, the server decrypts the input data, performs data aggregation, and obtains the Frequency Lookup Table - a sketch matrix $\mathcal{M}$, via Algorithm 2.
\end{enumerate}

\begin{figure*}[t]
    \centering
    \includegraphics[width=0.55\textwidth]{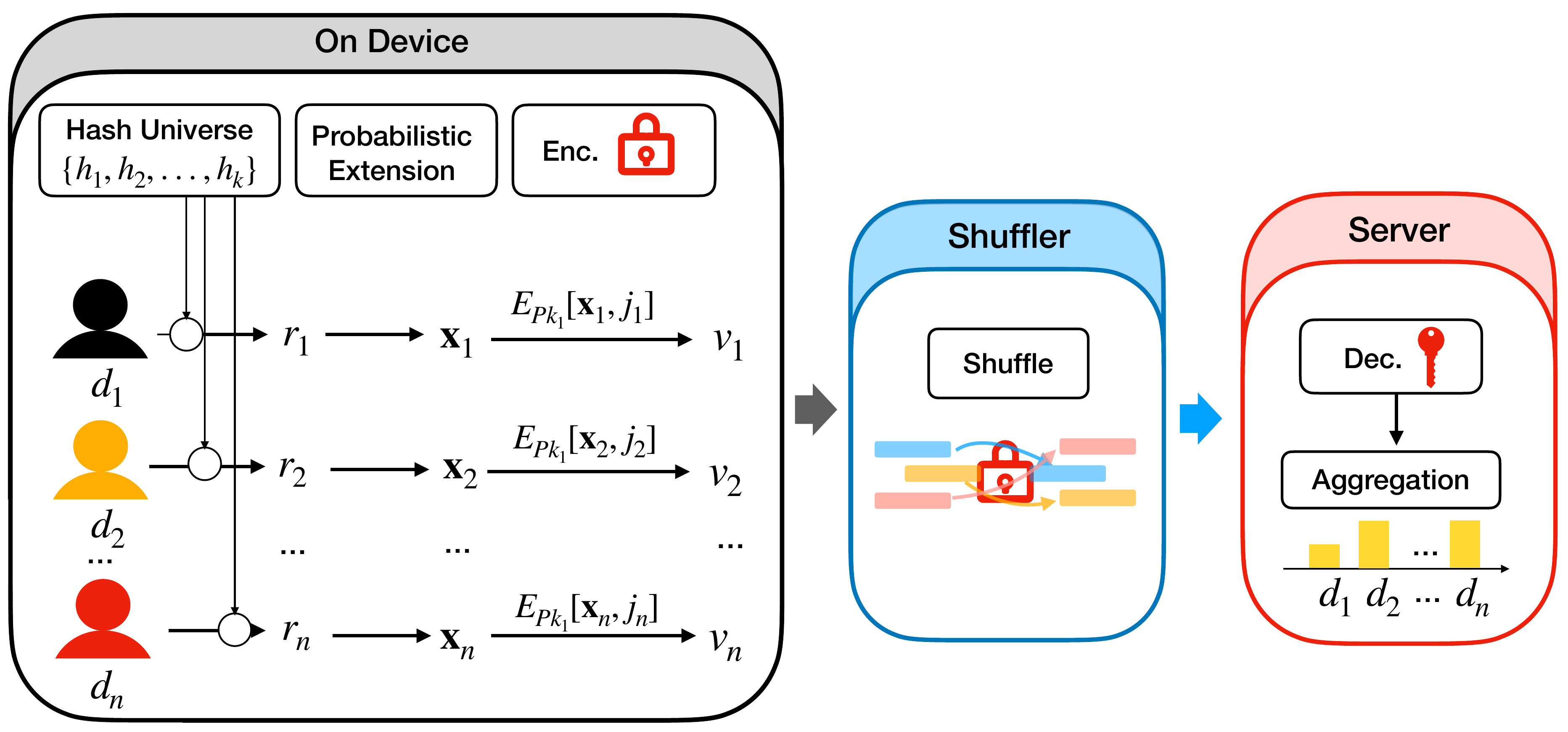}
    \caption{Illustration of the optimal Count Mean Sketch with the ESA framework.}
    \label{fig:OCMS}
\end{figure*}

\begin{algorithm}[t]
\caption{On-device LDP Algorithm}\label{alg:LDP_on_device}
 \hspace*{\algorithmicindent} 
 \textbf{Input:} $\mathcal{H}$: the hash universe;
$[\mathbf{m}]$: extension domain;
$d$: raw message; $s$: message size; $p$: inclusion probability; $pk$: server's public key.\\
 \hspace*{\algorithmicindent} 
 \textbf{Output:} Encrypted privatized message $v$.
\begin{algorithmic}
\item Randomly select $h_j\sim \mathcal{H}$; 
\item Calculate the hashed value $r = h_j[d]$;
\item Initiate output vector $\mathbf{x}$ as an empty set;
\item Add $r$ to $\mathbf{x}$ with probability of $p$;
\If {$r$ is added to $\mathbf{x}$}
\State Randomly select $s-1$ elements from $[\mathbf{m}]/r$;
\Else
\State Randomly select $s$ elements from $[\mathbf{m}]/r$;
\EndIf
\item Add selected elements to $\bold{x}$;
\item Encrypt with Server's public key
$$v = E_{pk}[\textbf{x}, j];$$
\State \Return $v$
\end{algorithmic}
\end{algorithm}

\begin{algorithm}[t]
\caption{Constructing Sketch Matrix }\label{alg:LDP_server1}
 \hspace*{\algorithmicindent} 
 \textbf{Input:} $m$: domain of hash function, $k$: size of the hash universe, $\langle\mathbf{x}_i, j_i\rangle_{i=1}^n$: decrypted clients' messages.\\
 \hspace*{\algorithmicindent} 
 \textbf{Output:} Sketch matrix $\mathcal{M}$.
\begin{algorithmic}
\item Initialized $\mathcal{M} = [0]^{[m \times k]}$;
\For{Each pair of $\langle\bold{x}, j\rangle$}
\For{Each value $x\in \bold{x}$}
\State $\mathcal{M}_j[x] \gets \mathcal{M}_j[x] + 1$;
\EndFor
\EndFor
\State \Return $\mathcal{M}$
\end{algorithmic}
\end{algorithm}

\begin{algorithm}
\caption{Frequency Estimation }\label{alg:LDP_server2}
 \hspace*{\algorithmicindent} 
 \textbf{Input:} $d$: item to check frequency, $p$: inclusion probability, $s$: message size, $m$: domain of hash function, $\mathcal{H}$: hash universe, $\mathcal{M}$: aggregated sketch matrix.\\
 \hspace*{\algorithmicindent} 
 \textbf{Output:} Estimated frequency $\hat{f}(d)$.
\begin{algorithmic}
\item Calculate $q$ according to:$$q = \frac{p(s - 1) + (1 - p)s}{m - 1};$$
\item Initialized count $C(d) = 0$;
\For{$j \in [1,2,...,k]$}
\State $C(d) \gets C(d) + \mathcal{M}_j[h_j[d]]$;
\EndFor
\item Get $\hat{f}(d)$ according to:
$$\hat{f}(d) = \frac{C(d)-\frac{pn}{m}-qn\left(1 -\frac{1}{m}\right)}{(p-q)\left(1 -\frac{1}{m}\right)};$$
\State \Return $\hat{f}(d)$
\end{algorithmic}
\end{algorithm}

  
  

We now detail operations in each phase.

\textbf{Phase 1: On-device operations.}
We present the process for a single data privatization and encryption in Algorithm 1, which consists of the following steps.

  
{Hash Encoding:} 
  Each client first uniformly selects a hash function from $\{h_1, h_2, \ldots, h_k\}$ 
  and calculates a hashed value $r=h_j(d)$
 of their raw data $d$. By construction, $r$ is an integer within $[1,m]$.
  
{Probabilistic Inclusion:}
Each client initializes their privatized vector $\textbf{x}$ as an empty set, then adds $r$ to $\textbf{x}$ with probability $p \in [0.5,1]$.

{Probabilistic Extension:}
Set the extension domain for each client as $[\mathbf{m}]/r = \{1, 2, \ldots, r-1, r+1, \ldots, m\}$.
If $r$ is added to $\textbf{x}$, then uniformly select $s-1$ elements from $[\mathbf{m}]/r$ and append them to $\textbf{x}$. If $r$ is not added to $\textbf{x}$, then uniformly select $s$ elements from $[\mathbf{m}]/r$ and append them to $\textbf{x}$. Return the privatized vector along with the selected hash index $\langle \textbf{x}, j \rangle$.

{Encryption and Release:}
Each client encrypts $\langle \textbf{x}, j \rangle$ with the server's public key $pk$ to obtain $v = E_{pk}[\textbf{x}, j]$, then releases $v$ to the shuffler.

Compared to the original CMS, our GCMS uses a novel method to constructing the privatized message (steps (b)\&(c)): the length of our privatized message is $s$, with $s\le m/2$, whereas in Apple's CMS, each client's input is a binary vector of size $m$. Our construction ensures that the probability of adding $r$ is $p$, and the probability of adding all other hash values $[\mathbf{m}]/r$ is $q$, where 
 \begin{equation}\label{eq:qm}
     q = \frac{s-p}{m - 1}.
 \end{equation}
This achieves the same randomness as in perturbing a binary vector, while significantly reduces the communication cost in terms of plain text length.


\textbf{Phase 2: Shuffler's anonymization and shuffling}. This phase, which is standard, involves 
\textit{Anonymization} and \textit{Shuffling}
after receiving each client's input. The shuffler removes all $id$ traceable information from the received message, e.g., the communication header. It then randomly shuffles the anonymized messages in batches and then releases the data to the server. The shuffled sequence is denoted as $\pi\left(v_1,v_2,\ldots,v_n\right)$, where $\pi$ is a random permutation function. Additionally, the shuffler may block clients who have already sent $l$ messages to limit their contributions. This helps filter out heavy clients or counter data poisoning attacks, effectively limiting sensitivity for client-level DP analysis.

\textbf{Phase 3: Server's frequency estimation.} 

The server first decrypts messages with the secret key and obtains:
$$\pi\left(\langle \mathbf{x}_1, j_1 \rangle, \langle \mathbf{x}_2, j_2 \rangle, \ldots, \langle \mathbf{x}_n, j_n \rangle \right).$$ Then, the server-side algorithm constructs a sketch matrix, in which the rows are indexed by hash functions, and each row $j$ is the sum of the privatized vector of clients who selected the hash function $h_j$. We present this step in Algorithm \ref{alg:LDP_server1}. 


To estimate the frequency of a message $d$, the server calculates the all hashed value of $d$: $\{h_j[d]\}_{j=1}^k$, and then aggregates the total count from the sketch matrix $\mathcal{M}$:
$$C(d) = \sum_{j=1}^k \mathcal{M}_j[h_j[d]].$$
An unbiased estimator for estimating the numbers of $d$ occurring is



\begin{equation}\label{eq:est}
    \hat{f}(d) = \frac{C(d)-\frac{pn}{m}-qn\left(1 -\frac{1}{m}\right)}{(p-q)\left(1 -\frac{1}{m}\right)},
\end{equation}
 where $q$ is given in \eqref{eq:qm}.

In contrast to a fixed probabilities $p,q$ in Apple's CMS,
our GCMS allows for flexibility in designing $p$ and $s$, thus enables optimal design tailored to specific item frequency regimes. We will discuss this in detail in Section \ref{sec:optimal_p}.


\begin{rmk}
    The communication cost, measured by the size of plaintext, is $\mathcal{O}(s\log m)$: there are $s$ values to transfer, and each value requires $\log m$ bits. Comparably, We note that the communication cost for Apple's CMS is $\mathcal{O}(m)$ for each client. The computation cost at the server for aggregation is $\mathcal{O}(ns)$ for our GCMS, implying that the matrix $\mathcal{M}$ needs to be updated for $n$ times, each contains $s$ operations. In contrast, Apple's CMS requires a computation cost of $\mathcal{O}(nm)$, as each bit contained in the message needs to be de-biased and then aggregated.
\end{rmk}

\subsection{Theoretical Analysis of GCMS}
In this section, we provide theoretical analysis of the privacy and utility of our GCMS. We first provide the LDP guarantee for each client's privatized vector.
\begin{thm}[Privacy of GCMS]
    The privatized vector $\mathbf{x}$ in GCMS is $\epsilon$-locally differentially private, with
    \begin{equation}\label{eq:epsilon}
        \epsilon \le \left|\log\left(\frac{p(m-s)}{(1-p)s}\right)\right|.
    \end{equation}
\end{thm}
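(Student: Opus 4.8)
The plan is to bound the likelihood ratio $\Pr[\text{output}=y\mid d]/\Pr[\text{output}=y\mid d']$ directly, rather than invoking the generic pure-protocol bound $\log(p/q)$. This distinction is essential: because the extension step in Algorithm~1 selects a subset of \emph{fixed} cardinality $s$, the coordinates of $\mathbf{x}$ are correlated, so the mechanism is not a product of independent per-coordinate perturbations and the shortcut $\epsilon=\log(p/q)$ (valid for independent-bit unary encodings, whose estimator still uses the marginals $p,q$) would actually understate $\epsilon$ here. First I would note that the hash index $j$ is drawn uniformly from $[k]$ independently of $d$, so conditioning on a realized $j=j_0$, the raw input enters only through the deterministic value $r=h_{j_0}(d)\in[\mathbf{m}]$; since $\Pr[j=j_0]=1/k$ for every input, that factor cancels in the ratio and it suffices to bound, for a fixed hash function, the ratio of the set-valued distribution of $\mathbf{x}$ under two hash values $r,r'\in[\mathbf{m}]$.

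Next I would compute $\Pr[\mathbf{x}=S\mid r]$ for an arbitrary output set $S$ with $|S|=s$. Since the output always has exactly $s$ elements, the support is the collection of all size-$s$ subsets of $[\mathbf{m}]$ regardless of the input, so no ratio is infinite. If $r\in S$, the event requires that $r$ was included (probability $p$) and that the remaining $s-1$ elements of $S$ coincide with the uniform $(s-1)$-subset drawn from $[\mathbf{m}]/r$, giving $\Pr[\mathbf{x}=S\mid r]=p\binom{m-1}{s-1}^{-1}$. If $r\notin S$, the event requires that $r$ was omitted (probability $1-p$) and that $S$ equals the uniform $s$-subset drawn from $[\mathbf{m}]/r$, giving $\Pr[\mathbf{x}=S\mid r]=(1-p)\binom{m-1}{s}^{-1}$.

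I would then maximize the ratio over output sets and input pairs. For any fixed $S$ of size $s$ with $1\le s\le m-1$, there is both an element in $S$ and an element outside $S$, so the two conditional probabilities above are both attained; the ratio is therefore extremized by taking $r\in S$ and $r'\notin S$, yielding
\begin{equation*}
\frac{\Pr[\mathbf{x}=S\mid r]}{\Pr[\mathbf{x}=S\mid r']}=\frac{p}{1-p}\cdot\frac{\binom{m-1}{s}}{\binom{m-1}{s-1}}=\frac{p}{1-p}\cdot\frac{m-s}{s}=\frac{p(m-s)}{(1-p)s}.
\end{equation*}
The opposite assignment $r\notin S$, $r'\in S$ gives the reciprocal, so the overall worst-case ratio is $\max\!\big(\tfrac{p(m-s)}{(1-p)s},\tfrac{(1-p)s}{p(m-s)}\big)$, and taking logarithms gives $\epsilon\le\big|\log\frac{p(m-s)}{(1-p)s}\big|$ as claimed. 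Under the stated regime $p\ge 0.5$, $s\le m/2$ both factors exceed $1$, so the absolute value is redundant there, but keeping it covers all parameter choices.

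The routine part is the binomial identity $\binom{m-1}{s}/\binom{m-1}{s-1}=(m-s)/s$. The main thing to get right is the combinatorial bookkeeping of the two cases together with the realization that the fixed subset size $s$ couples the coordinates: one must confirm that the extreme ratio comes from the ``true value present vs.\ absent'' configuration, that this configuration is simultaneously feasible for both inputs (guaranteed by $1\le s\le m-1$), and that the $j$-marginalization does not inflate the bound, rather than uncritically importing $\log(p/q)$ for this non-product mechanism.
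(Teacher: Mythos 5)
Your proof is correct and follows essentially the same route as the paper's: a case analysis on whether the true hashed value lies in the output set, computing $\Pr[\mathbf{x}=S\mid r]$ as $p/\binom{m-1}{s-1}$ or $(1-p)/\binom{m-1}{s}$ and reducing the worst-case ratio to $\frac{p}{1-p}\cdot\frac{m-s}{s}$ via the same binomial identity. If anything, your write-up is slightly more careful than the paper's (which has a typographical slip writing $p\binom{m-1}{s-1}$ for $p/\binom{m-1}{s-1}$ in its Case 3 before recovering), since you make explicit the cancellation of the uniform hash-index factor $1/k$ and the feasibility condition $1\le s\le m-1$ ensuring both configurations are attainable.
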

Since the hashing domain $m$ is a constant, the privacy loss $\epsilon$ mainly depends on two parameters: $p$ and $s$.  For a large $p$ and small $s$, the mechanism tends to release only the true hash value, resulting in a weak guarantee. Conversely, when $p$ is small (close to $0.5$) and $s$ is large (close to $m/2$), the inclusion of the true hash value alongside other potential values becomes random, approximating a probability of 0.5. This randomness ensures that the true hash value cannot be distinguished, thus achieving a strong privacy guarantee.

We note that shuffling process can further amplify the above LDP guarantee by Theorem \ref{thm.shuffle}.


Next, we show the unbiased property of our estimator and give a general formulation for the MSE (variance) of the estimator in the following theorem.

\begin{thm}[Utility of GCMS]
    The estimator given in \eqref{eq:est} is an unbiased estimator of $f(d)$:
$$\mathbb{E}[\hat{f}(d)]=f(d).$$
    The variance of this estimator is
$$\text{Var}[\hat{f}(d)]=\frac{Var[C(d)]}{(p-q)^2(1-1/m)^2},$$ where
\begin{equation*}
\begin{aligned}
&\text{Var}[C(d)] = f(d) \left(p-\frac{p^2}{k}\right)\\
&+ (n-f(d))\left(\alpha_1-\frac{1}{k}\alpha_1^2 - \alpha_2\right) +\alpha_2\left(\sum_{d'\neq d}f(d')^2\right),
\end{aligned}
\end{equation*}
and $$\alpha_1 = \frac{p+(m-1)q}{m},~ \alpha_2 = \frac{(p-q)^2}{km}\left(1-\frac{1}{m}\right).$$
\end{thm}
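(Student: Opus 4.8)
The plan is to reduce the aggregate count $C(d) = \sum_{j=1}^k \mathcal{M}_j[h_j(d)]$ to a single sum of per-client indicators and then treat unbiasedness and variance separately. The key observation is that each client $i$ writes its vector $\mathbf{x}_i$ into exactly one row, namely the row $J_i$ of the hash function it sampled, so the double sum collapses to $C(d) = \sum_{i=1}^n C_i$ with $C_i = \mathbbm{1}\{h_{J_i}(d) \in \mathbf{x}_i\}$. This indicator records whether the hash image of the query item $d$, computed with client $i$'s own hash function, falls inside that client's privatized set. This reduction is the backbone of both halves of the proof.

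For unbiasedness I would compute $\mathbb{E}[C_i]$ by conditioning on the client's true value. If $d_i = d$ then $h_{J_i}(d) = h_{J_i}(d_i)$ is the client's own hash image, included with probability exactly $p$, so $\mathbb{E}[C_i] = p$. If $d_i \neq d$ I model the hash family as uniformly random, so that $d$ and $d_i$ collide under the sampled hash with probability $1/m$; a collision makes the event a true-value inclusion (probability $p$), otherwise it is an off-value inclusion (probability $q = (s-p)/(m-1)$), giving $\mathbb{E}[C_i] = \tfrac{1}{m}p + (1-\tfrac{1}{m})q = \alpha_1$. Summing yields $\mathbb{E}[C(d)] = f(d)\,p + (n-f(d))\,\alpha_1$; writing $\alpha_1 = q + (p-q)/m$, isolating the factor $f(d)(p-q)(1-1/m)$, and dividing by the normalizer $(p-q)(1-1/m)$ recovers $\mathbb{E}[\hat f(d)] = f(d)$. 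This part is routine algebra.

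For the variance I would use $\mathrm{Var}[C(d)] = \sum_i \mathrm{Var}[C_i] + \sum_{i\neq i'} \mathrm{Cov}[C_i,C_{i'}]$ and classify ordered pairs $(i,i')$ by how $d_i,d_{i'}$ relate to $d$ and to each other. The single-client variances are Bernoulli, $p(1-p)$ for clients holding $d$ and $\alpha_1(1-\alpha_1)$ otherwise, and these assemble into the two frequency-proportional terms. The delicate part is the covariances: conditioning on the realized hash family and on the two independent hash choices makes $C_i$ and $C_{i'}$ conditionally independent (their privatization noise is independent), so each covariance reduces to that of the conditional inclusion probabilities $\bar g_i = q + (p-q)\,N_i/k$, where $N_i \sim \mathrm{Binomial}(k,1/m)$ counts the hash functions under which $d$ collides with $d_i$. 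I expect to show that when $d_i\neq d_{i'}$ the counts $N_i,N_{i'}$ are independent, because for three distinct items the per-row collision indicators factor, forcing the covariance to vanish; the only surviving contribution comes from pairs sharing a common off-target value $d'\neq d$, where $N_i = N_{i'} = N_{d'}$ and $\mathbb{E}[\bar g_i^2]-\alpha_1^2 = (p-q)^2 \mathrm{Var}[N_{d'}]/k^2 = \tfrac{(p-q)^2}{km}(1-\tfrac{1}{m}) = \alpha_2$. Collecting the $f(d')(f(d')-1)$ same-value pairs and converting $\sum_{d'}f(d')(f(d')-1)$ to $\sum_{d'}f(d')^2 - (n-f(d))$ then folds an extra term into the $(n-f(d))$ coefficient and yields the stated three-term expression, after which dividing by $(p-q)^2(1-1/m)^2$ gives $\mathrm{Var}[\hat f(d)]$.

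The main obstacle is exactly this covariance bookkeeping. I must argue carefully that the shared randomness of the hash family correlates \emph{only} clients holding the same non-target item, and compute the second moment of the collision count $N_{d'}$ to pin down $\alpha_2$; getting the per-row $1/k$ normalizations correct and performing the $\sum f(d')^2$ versus $\sum f(d')(f(d')-1)$ conversion exactly is where the accounting is most error-prone, and it is the step I would check most carefully against the claimed coefficients.
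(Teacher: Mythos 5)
Your reduction $C(d)=\sum_i C_i$ with $C_i=\mathbbm{1}\{h_{J_i}(d)\in\mathbf{x}_i\}$ is exactly the paper's setup (the paper writes $C_i=\sum_{j}Z_j^{(i)}(d)$ with $Z_j^{(i)}=\mathbbm{1}_{\{J_i=j\}}(\cdots)$), your unbiasedness computation matches, and your covariance bookkeeping --- conditional independence of the privatization noise given the hash family, vanishing covariance for distinct off-target items, per-pair covariance $\frac{(p-q)^2}{km}\left(1-\frac{1}{m}\right)=\alpha_2$ for two clients sharing the same $d'\neq d$, and the conversion $\sum_{d'}f(d')(f(d')-1)=\sum_{d'}f(d')^2-(n-f(d))$ --- reproduces the paper's Lemma on covariances and the off-diagonal part of its variance calculation exactly.

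The gap is in the diagonal term, and it is the step you said you would check most carefully: your per-client variances $p(1-p)$ and $\alpha_1(1-\alpha_1)$ do \emph{not} assemble into the theorem's coefficients $p-\frac{p^2}{k}$ and $\alpha_1-\frac{\alpha_1^2}{k}$; they differ by $\left(1-\frac{1}{k}\right)p^2$ and $\left(1-\frac{1}{k}\right)\alpha_1^2$ per client. Carried out as written, your argument terminates at
$$f(d)\left(p-p^2\right)+(n-f(d))\left(\alpha_1-\alpha_1^2-\alpha_2\right)+\alpha_2\sum_{d'\neq d}f(d')^2,$$
which is not the displayed statement, so you cannot claim the ``routine algebra'' closes. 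The source of the discrepancy is instructive: the paper expands $\mathrm{Var}\left(\sum_j Z_j^{(i)}\right)$ and keeps only $\sum_j \mathrm{Var}(Z_j^{(i)})=p-\frac{p^2}{k}$, invoking its covariance lemma (which is stated only for $i_1\neq i_2$) to discard the same-client cross-row terms; but for a single client $Z_{j_1}^{(i)}Z_{j_2}^{(i)}=0$ identically, so $\mathrm{Cov}(Z_{j_1}^{(i)},Z_{j_2}^{(i)})=-\frac{p^2}{k^2}\neq 0$, and summing over the $k(k-1)$ ordered pairs restores exactly your $p-p^2$. In other words, your direct Bernoulli computation ($C_i$ for a client holding $d$ is $\mathbbm{1}\{r_i\in\mathbf{x}_i\}$, a Bernoulli$(p)$ with no dependence on $k$) is the one consistent with the mechanism; the mismatch lies between your (sound) derivation and the theorem's stated coefficients. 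You must either present the corrected variance or explicitly reconcile the $1/k$ factors --- asserting that your terms ``assemble into'' the stated expression is the one false step in the proposal.
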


Our variance formulation is general, and captures the original CMS and pure LDP protocols as special cases. 
The original CMS is with parameters $p=\frac{e^{\epsilon/2}}{e^{\epsilon/2}+1}$, $q=\frac{1}{e^{\epsilon/2}+1}$, and $k$ hash functions with a hashing domain of $[\mathbf{m}]$. By instantiating these parameters within our framework, we can get the same utility guarantee as Theorem 4.2 in \cite{apple2017learning}, with a correction to an error in the variance calculation on page 21.
For a pure LDP protocol, setting $k=1$ and $m=\infty$ within our framework yields the same utility guarantee as in \cite{wang2017locally}.

\begin{cor}[Theorem 2 in \cite{wang2017locally}]
The variance of the pure LDP protocol is 
\begin{equation}\label{eq:oue}
  \text{Var}[\hat{f}(d)]=\frac{nq(1-q)}{(p-q)^2} + \frac{f(d)(1-p-q)}{p-q}.
\end{equation}

\end{cor}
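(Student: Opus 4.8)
The plan is to derive this directly from the general variance formula established in the Utility of GCMS theorem by specializing to $k=1$ and letting $m \to \infty$, followed by an algebraic rearrangement. Recall that that theorem gives
$$\text{Var}[\hat{f}(d)] = \frac{\text{Var}[C(d)]}{(p-q)^2(1-1/m)^2},$$
with $\text{Var}[C(d)]$ expressed through $\alpha_1 = \frac{p+(m-1)q}{m}$ and $\alpha_2 = \frac{(p-q)^2}{km}\left(1-\frac{1}{m}\right)$. First I would set $k=1$, which collapses the leading coefficient $p - p^2/k$ to $p(1-p)$ and the middle coefficient $\alpha_1 - \alpha_1^2/k - \alpha_2$ to $\alpha_1(1-\alpha_1) - \alpha_2$.

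Next I would take the limit $m \to \infty$ while holding the protocol parameters $p$ and $q$ fixed (the regime in which the extension domain and message size grow together so that the false-positive probability $q$ remains the meaningful parameter). In this limit one checks term by term that $1 - \frac{1}{m} \to 1$, that $\alpha_1 = \frac{p}{m} + q\left(1-\frac{1}{m}\right) \to q$, and that $\alpha_2 = \frac{(p-q)^2}{m}\left(1-\frac{1}{m}\right) \to 0$. Substituting these limits, $\text{Var}[C(d)]$ converges to $f(d)p(1-p) + (n-f(d))q(1-q)$ and the denominator converges to $(p-q)^2$, so that
$$\text{Var}[\hat{f}(d)] = \frac{f(d)p(1-p) + (n-f(d))q(1-q)}{(p-q)^2}.$$

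The one step requiring genuine care, and which I expect to be the main obstacle, is showing that the cross term $\alpha_2 \sum_{d'\neq d} f(d')^2$ vanishes in the limit. Here I would use that $\sum_{d'} f(d') = n$ forces $\sum_{d' \neq d} f(d')^2 \le n^2$, a quantity independent of $m$, while $\alpha_2 = \mathcal{O}(1/m)$; hence the product is $\mathcal{O}(1/m)$ and drops out. This is the only place where the infinite-domain idealization does real work, so it is worth stating the boundedness argument explicitly rather than treating it as automatic.

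Finally I would convert the resulting expression into the claimed form using the identity $p(1-p) - q(1-q) = (p-q)(1-p-q)$. Writing $f(d)p(1-p) + (n-f(d))q(1-q) = nq(1-q) + f(d)\left[p(1-p)-q(1-q)\right]$ and dividing by $(p-q)^2$ then yields
$$\text{Var}[\hat{f}(d)] = \frac{nq(1-q)}{(p-q)^2} + \frac{f(d)(1-p-q)}{p-q},$$
which is exactly the statement of the corollary, recovering Theorem 2 of \cite{wang2017locally}.
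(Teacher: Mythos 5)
Your proposal is correct and follows exactly the route the paper intends: it specializes the general GCMS variance formula to $k=1$ and $m\to\infty$ (with $p,q$ held fixed), checks that $\alpha_1\to q$, $\alpha_2\to 0$, and that the collision term $\alpha_2\sum_{d'\neq d}f(d')^2$ vanishes since it is $\mathcal{O}(1/m)$, and then rewrites the resulting expression via $p(1-p)-q(1-q)=(p-q)(1-p-q)$. The paper leaves these steps implicit, so your explicit treatment of the vanishing cross term is a welcome addition but not a different argument.
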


Among all unbiased estimators, the one with the least variance is desired. Our general formulation characterizes the dependence of the variance with all the parameters of the protocols, thus provides guidance to optimize the designed protocol.

\begin{figure}[t]
\centering 
\subfigure[$n = 100000$, $m = 1000$, $k = 500$.]
{\includegraphics[width=0.4\textwidth]{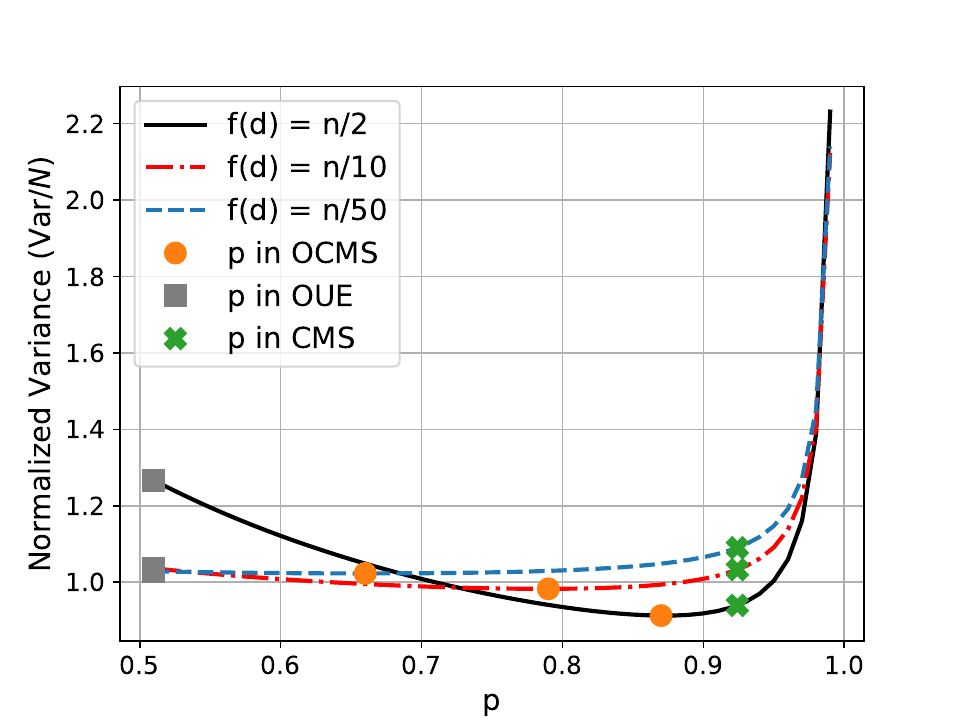}
\label{fig:OPT1}}
\subfigure[$n = 100000$, $m = 1500$, $k = 50$.]
{\includegraphics[width=0.4\textwidth]{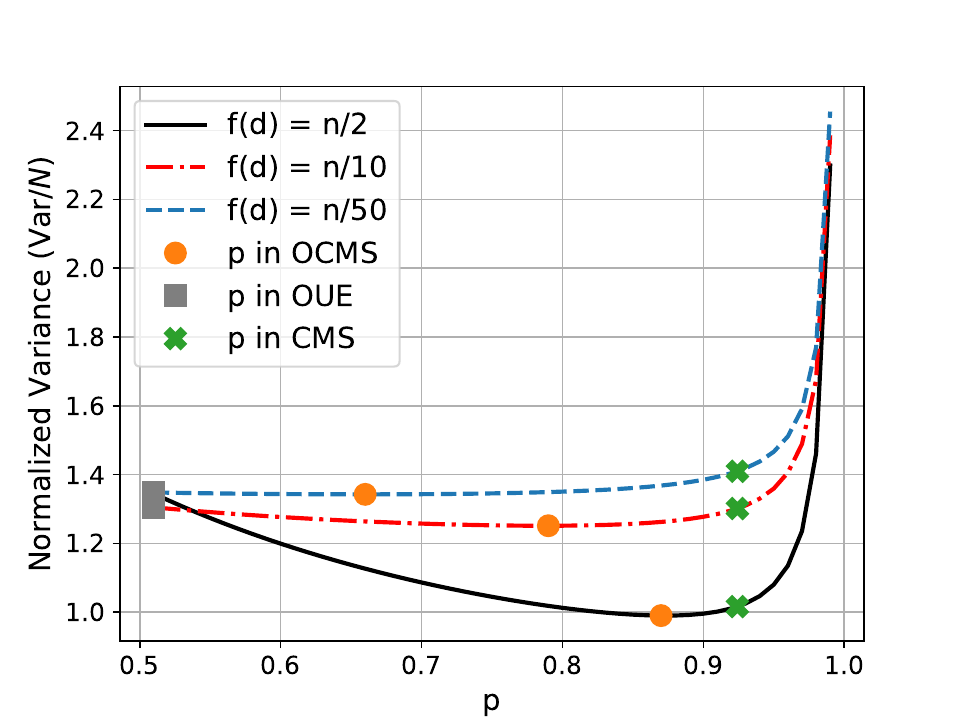}
\label{fig:OPT2}}
\caption{Choices of $p$ in OUE-LDP, Apple's CMS, and our OCMS. Optimal $p$ depends on the ratio of $n/f(d)$ and $k$, and is independent of $m$.}
\label{fig:optimal_p}
\end{figure}

\subsection{Optimal CMS (OCMS) for Targeted Frequency Estimation}\label{sec:optimal_p}

In Wang et al. \cite {wang2017locally}, Optimal Unary Encoding (OUE) was proposed to optimize the overall variance. However, OUE only considers optimizing the variance term with respect to $n$, based on the argument that most values appear infrequently and that low estimation variances for these infrequent values help avoid false positives when identifying more frequent ones. However, when analysts have a specific target frequency regime they aim to track, especially when $f(d)$ is at the same order as $n$, optimizing the entire variance including the term with respect to $f(d)$ can achieve more accurate estimation of these specific frequencies. 

In many applications, there can be some specific frequency region that is of particular interest. For example, for online shopping, the data engineers might be interested in the frequency of the most popular items, to find the top-k heavy hitter, and use them for recommendation. For emoji dashboard editing, the least popular emojis are of more interest, as to be substituted in the next version. Another example is the popularity tracking of a news or video link, where the frequency could vary from time to time.

From the formulation of the utility of the algorithm, a natural question is, can we optimize the algorithm parameters $p$ and $q$, to achieve the minimized variance, while making the algorithm achieve $\epsilon$-LDP. We note that the total client number $n$ is usually known as a priori; The variance monotonically decreases with the message size $m$ and the number of hash functions $k$. On the other hand, large $m$ increases the communication cost and server storage. Large $k$ also enlarges the server storage and the complexity of hash function design. Therefore, the selection of $m$, $k$ should be the maximum that allowed by the specific implementation requirement.

We also note that when $\epsilon$ is given, $q$ can be expressed as a function of $p$ and $m$: From \eqref{eq:qm} and \eqref{eq:epsilon}. 
\begin{equation}\label{eq:qfromp}
    q \ge \frac{m-e^{\epsilon} + e^{\epsilon}p -p}{(m-1)\left(\frac{e^{\epsilon}}{p} -e^{\epsilon} + 1\right)}.
\end{equation}

Therefore, when $n$, $m$, $k$ $\epsilon$ are considered as constant variables. The variance can be expressed as a function that determined by $f(d)$ and $p$. Then for any $f(d)$, the optimal mechanism parameter $p$ can be derived from the optimization problem in the following lemma:
\begin{lem}\label{lemma:opt_p}
    For the Optimal CMS mechanism, for a given $\epsilon$, and for a targeted frequency $f(d)$, the optimal perturbation parameter $p$ can be derived from the following optimization problem:
\begin{equation*}
    p = \text{argmin}\left\{ \frac{kw-p +\lambda{(w-1)(kw-p-wp)}}{k(1-p)^2p}\right\},
\end{equation*}
where
\begin{equation*}
    w \overset{\Delta}{=}e^{\epsilon}(1-p) + p, ~\text{and}~ \lambda = f(d)/n.
\end{equation*}
\end{lem}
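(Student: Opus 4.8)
The plan is to start from the variance expression in the Utility of GCMS theorem and reduce it, under the binding privacy constraint, to a one-variable function of $p$ whose minimizer is the claimed \textrm{argmin}. The central device is to invoke \eqref{eq:epsilon} at equality, so that the privacy budget is fully spent and $s$ (hence $q$) becomes a function of $p$ alone. Solving $e^{\epsilon} = p(m-s)/((1-p)s)$ for $s$ gives $s = pm/w$ with $w \overset{\Delta}{=} e^{\epsilon}(1-p)+p$, and then \eqref{eq:qm} yields $q = p(m-w)/(w(m-1))$; this is exactly the equality case of \eqref{eq:qfromp}, i.e. the mechanism that spends the full privacy budget, which is the regime the optimization targets.

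First I would push these substitutions through the auxiliary quantities appearing in the variance. A short computation gives the clean identities $p-q = pm(w-1)/(w(m-1))$, $\alpha_1 = p/w$, and $\alpha_2 = p^2(w-1)^2/(w^2(m-1)k)$. The payoff is that the denominator of the variance collapses, $(p-q)^2(1-1/m)^2 = p^2(w-1)^2/w^2$, eliminating all explicit dependence on $m$ there. Substituting into $\mathrm{Var}[\hat f(d)] = \mathrm{Var}[C(d)]/\big((p-q)^2(1-1/m)^2\big)$ then leaves an expression in $p, w, k, m$ and the frequencies.

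Next I would discard the $O(1/m)$ contributions, namely $\alpha_2$ inside the $(n-f(d))$ term and the cross term $\alpha_2\sum_{d'\neq d} f(d')^2$; this is the step that both removes the dependence on the full frequency profile and explains why the optimal $p$ is independent of $m$, as the figure caption asserts. What survives is $\mathrm{Var}[C(d)] \approx \tfrac{p}{k}\big[f(d)(k-p) + (n-f(d))(kw-p)/w^2\big]$. Using the elementary identity $w-1=(1-p)(e^{\epsilon}-1)$, I rewrite $(w-1)^2 = (1-p)^2(e^{\epsilon}-1)^2$ and drop the constant factor $(e^{\epsilon}-1)^{-2}$ (harmless inside an \textrm{argmin}), and normalizing by $n$ introduces $\lambda = f(d)/n$. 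A direct expansion then verifies $w^2\lambda(k-p) + (1-\lambda)(kw-p) = kw-p+\lambda(w-1)(kw-p-wp)$, which is precisely the stated numerator over $k(1-p)^2 p$; minimizing this over $p$, with $w$ read as the affine function of $p$ above, proves the lemma.

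The main obstacle is the justification for dropping the distribution-dependent term $\alpha_2\sum_{d'\neq d}f(d')^2$: it is the only piece tying the objective to the entire histogram rather than to $f(d)$ alone, and a crude bound such as $\sum_{d'\neq d} f(d')^2 \le n^2$ makes it comparable to the retained terms unless $m$ is large. I would therefore state explicitly the large-$m$ regime under which the reduction is valid, noting that $\alpha_2/(\alpha_1^2/k)=(w-1)^2/(m-1)=O(1/m)$, so the discarded $O(1/m)$ pieces are negligible relative to the retained $O(1/k)$ corrections. The remaining work — the boundary substitution and the final polynomial identity — is routine but must be carried out carefully to keep the bookkeeping of $w$ as a function of $p$ consistent throughout.
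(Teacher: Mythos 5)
Your route is the paper's route: you impose the privacy constraint with equality to get $s=pm/w$ and $q=p(m-w)/(w(m-1))$, derive the same identities $\alpha_1=p/w$, $\alpha_2=p^2(w-1)^2/(w^2k(m-1))$, $(p-q)^2(1-1/m)^2=p^2(w-1)^2/w^2$, divide out the $p$-independent factor $(e^{\epsilon}-1)^2$ from $(w-1)^2=(1-p)^2(e^{\epsilon}-1)^2$, and verify the same final polynomial identity. All of that algebra checks out and matches the appendix.

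The one place you go wrong is the step you yourself flag as the main obstacle: you discard the terms $-(n-f(d))\alpha_2$ and $\alpha_2\sum_{d'\neq d}f(d')^2$ as an $O(1/m)$ approximation and conclude the lemma only holds in a large-$m$ regime. No approximation is needed, and your large-$m$ argument would leave the lemma (which is an exact statement about the argmin) unproved for moderate $m$. The point is that these terms, after the normalization by $(p-q)^2(1-1/m)^2$ that defines $\mathrm{Var}[\hat f(d)]$, contribute
\begin{equation*}
\frac{\alpha_2}{(p-q)^2(1-1/m)^2}\Bigl(\textstyle\sum_{d'\neq d}f(d')^2-(n-f(d))\Bigr)
=\frac{\sum_{d'\neq d}f(d')^2-(n-f(d))}{k(m-1)},
\end{equation*}
since $\alpha_2=\frac{(p-q)^2}{km}\left(1-\frac1m\right)$ makes the ratio exactly $1/(k(m-1))$ — you have already computed both quantities, and their ratio has no $p$ in it. So the discarded piece is an additive constant with respect to $p$ and is irrelevant to the argmin, whatever $m$ and whatever the rest of the histogram. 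This exact cancellation, not a large-$m$ asymptotic, is also the correct explanation for the remark that the optimal $p$ depends only on $\lambda$ and $k$ and not on $m$ or on $\sum_{d'\neq d}f(d')^2$. With that observation substituted for your approximation argument, your proof is complete and coincides with the paper's.
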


\begin{rmk}
    Under a given privacy budget $\epsilon$, the optimal selection of $p$ depends on the ratio of $f(d) /n$ and $k$, and does not depend on $m$.
\end{rmk}
\begin{rmk}
The optimal parameters $p$, $q$ derived from Algorithm \ref{alg:Optimal_CMS} and Eq. \eqref{eq:qfromp} can be adapted to enhance the ``pure" protocols in \cite{wang2017locally}, by setting $k =1$ and $m = \infty$.  
\end{rmk}
In Algorithm \ref{alg:Optimal_CMS}, we present an efficient algorithm to derive the optimal $p$ under the set of parameters of $\{\lambda, \epsilon, k\}$.

It is important to note that $s$ is an integer, and the optimal $s$ derived by Algorithm \ref{alg:Optimal_CMS} involves a ceiling operation, which may result in a slightly smaller $\epsilon$ than the targeted one, yielding a stronger privacy guarantee.

\begin{algorithm}[t]
\caption{Optimal CMS parameters}\label{alg:Optimal_CMS}
 \hspace*{\algorithmicindent} 
 \textbf{Input:} $\epsilon$: privacy parameter, $\lambda$: ratio of $f(d)/n$, $k$: size of the hash universe.\\
 \hspace*{\algorithmicindent} 
 \textbf{Output:} Optimal perturbation parameter $p$, $s$.
\begin{algorithmic}
\item Create function $f(p)$ that returns:
$$ \frac{kw-p +\lambda{(w-1)(kw-p-wp)}}{k(1-p)^2p};$$
\item Initialize $p_l = 0.5$ and $p_r = 1$;
\While {$p_l\le p_r$}
\State $p_0 = p_l + \frac{p_r - p_l}{3}$;
\State $p_1 = p_r - \frac{p_r - p_l}{3}$;
\If{$f(p_0)\le f(p_1)$}
\State $p_r = p_1$;
\Else
\State $p_l = p_0$;
\EndIf
\EndWhile
\item $p\gets (p_l +p_r)/2$, $s \gets \lceil\frac{m}{1 + (1/p-1)e^{\epsilon}}\rceil$;
\State \Return $p,s$
\end{algorithmic}
\end{algorithm}

\begin{figure*}[t]
    \centering
    \includegraphics[width=0.55\textwidth]{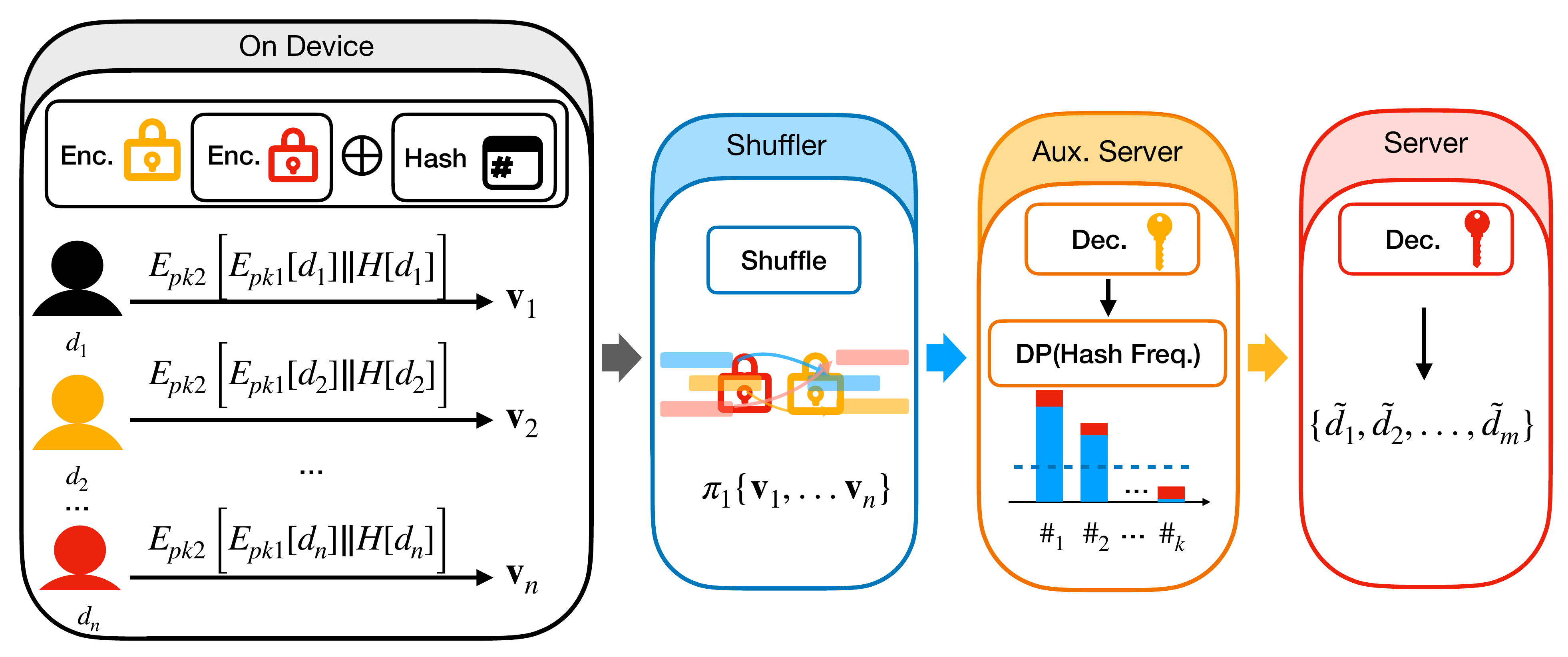}
    \caption{Illustration of the framework of privacy-preserving data collection with unknown domain.}
    \label{fig:CDP_unknown}
\end{figure*}

We note that under the CMS or similar structure, such as unary encoding, RAPPOR, etc. the total privacy budget $\epsilon$ is allocated to $p$ and $q$, and the proportion is approximately the following: 
\begin{equation*}
\begin{aligned}
&p \sim \frac{e^{\epsilon_p}}{e^{\epsilon_p} + 1},~\text{and}~
q \sim \frac{1}{e^{(\epsilon-\epsilon_p)} + 1}.
\end{aligned}
\end{equation*}
The operational meaning of $p$ is the probability of including the true item in the output, and $q$ represents the probability to include a false item to provide confusion in the result. Therefore, a mechanism with a large probability of $p$ and a small probability of $q$ leads to accurate output and enhanced utility. The extent of the increase in $p$ and the decrease in $q$ is restricted by the privacy budget $\epsilon$.

In terms of the allocation of the total budget, different mechanisms have different merits. In general, there are two different allocation principles. 1. Evenly divide $\epsilon$ to $\epsilon/2$ for $p$ and $\epsilon/2$ for $q$. Therefore: $p =  \frac{e^{\epsilon/2}}{e^{\epsilon/2} + 1}$, $q=\frac{1}{e^{\epsilon/2} + 1}$.  Google's RAPPOR and Apple's CMS, etc., fall into this category. 2. grant all $\epsilon$ to $q$, and $\epsilon_p = 0$ to $p$, therefore, $p =  1/2$, $q=\frac{1}{e^{\epsilon} + 1}$. The intuition behind this selection is that the true item is only included once with $p$, and all other items all have the probability of $q$ being included. Therefore, with a limited budget, the mechanism should prioritize minimizing $q$. The Optimal Unary Encoding (OUE) falls into this category. 

However, we argue that none of these solutions is optimal for any targeted frequency under the general CMS framework. In Fig. \ref{fig:optimal_p}, we show two different cases regarding the values of $m$ and $k$, for each case, we plot normalized variances derived in Theorem 3 with
three different values of $\lambda$. Specifically, $\lambda = 1/2$, $\lambda = 1/10$, $\lambda = 1/50$, corresponding to three levels of popularity for items. Then we fix $\epsilon = 5$ and vary the value of $p$ from $0.5$ to $1$ and showcase the corresponding variance. We highlight the optimal value of $p$ corresponding to each case, along with the selection of $p$ in CMS alike mechanisms and OUE alike mechanisms.

\section{Privacy-preserving Data Collection with Unknown Domain}\label{sec:unknown}

In the previous section, we introduced the GCMS algorithm for known item frequency estimation. However, this algorithm only functions effectively when the server knows the name of the item being queried. 


{When collecting unknown data strings, the server wants to recover and reconstruct as many data strings as possible as long as the privacy guarantee is not violated.} Collecting data with an unknown domain in an LDP manner is intrinsically challenging due to:
\begin{itemize}
\item Encoding algorithms for preprocessing (e.g., UE or hashing) require agreement on $\mathcal{D}$ between the server and the clients.
\item Segmenting the message increases the privacy loss to $k\epsilon$, where $k$ denotes the number of pieces.
    

\item Reconstructing the original input from segments incurs high computational costs, growing exponentially with the number of segments.
    
\end{itemize}

The key to avoid segmenting the original messages and the computational cost for the reconstruction is by encryption.
Therefore, we take a different approach by using central DP techniques and leveraging cryptographic tools and the ESA framework to achieve local-like privacy guarantees.


\subsection{The Protocol}
Our protocol uses the stability-based histogram technique \cite{korolova2009releasing} for collecting new data. To safeguard against direct data collection and tracing by the server, we integrate the ESA framework. Beyond the ESA framework, we introduce an auxiliary server to construct the histogram, akin to a
central curator under the central DP model, but with a crucial distinction: it does not access the original data messages. Instead, the auxiliary server only receives an encrypted version of the original message $d$ by the server's public key, $E_{pk1}[d]$, along with its hashed value $H[d]$. This ensures the auxiliary server gains no knowledge of the original messages {but the hash, which is irreversible}. To construct the histogram
the auxiliary server can count the number of message $d$ occurring by counting the hashed value $H[d]$, and each bin is represented by {a sampled } $E_{pk1}[d]$ {(two encrypted $d$ with the same $pk1$ can different)}, which can only be decrypted by the server.
To add additional layer of security protection, messages passing through the shuffler between the auxiliary server and clients are further encrypted using the auxiliary server's public key:
$E_{pk2} [E_{pk1}[d] \boldsymbol{\|} H[d]]$. The overall framework contains four phases and is shown in Fig. \ref{fig:CDP_unknown}. We next detail every step below.

Before the data collection, the server and the aux. server send their public keys, $pk1$ and $pk2$ to each client. The server and the aux. server agree on a set of privacy parameters $(\epsilon,\delta)$ for the DP guarantee of the release.

\begin{algorithm}
\caption{On-device algorithm for unknown data string collection}\label{alg:CDP_on_device}
 \hspace*{\algorithmicindent} 
 \textbf{Input:} $H$: unique hash function, $pk1$: server's public key, $pk2$: aux. server's public key, $d$: raw data.\\
 \hspace*{\algorithmicindent} 
 \textbf{Output:} Encrypted message $v$
\begin{algorithmic}
\item Calculate the hashed value $H[d]$;
\item Encrypt $d$ with $pk1$ $\gets E_{pk1}[d]$;
\item Encrypt $H[d]$ and $E_{pk1}[d]$ with $pk2$:
$$v = E_{pk2} [E_{pk1}[d] \boldsymbol{\|} H[d]];$$
\State \Return $v$
\end{algorithmic}
\end{algorithm}

\begin{algorithm}
\caption{Aux. Server's DP protection}\label{alg:CDP_aux_server}
 \hspace*{\algorithmicindent} 
 \textbf{Input:} $\pi\{v_1,..,v_n\}$: shuffled encrypted messages, $T$: Threshold for DP, $b$: scale of Laplacian noise.\\
 \hspace*{\algorithmicindent} 
 \textbf{Output:} Encrypted message set. 
\begin{algorithmic}
\item Initiate hash map $F$, release set $S$;
\item Decrypt each message in $\pi\{v_1,..,v_n\}$ and obtain:\\
Arr = $\{E_{pk1}[d_1] \boldsymbol{\|} H[d_1],...,E_{pk1}[d_n] \boldsymbol{\|} H[d_n]\}$;
\For{$E_{pk1}[d], H[d]$ in Arr}
\If {$H[d]$ in $F$}
\State $F$[$H[d]$].append($E_{pk1}[d]$);
\Else
\State $F$[$H[d]$] = [$E_{pk1}[d]$];
\EndIf
\EndFor
\For{key in $F$}
\If {len($F$[key]) + \textbf{Lap}(b) $\ge T$}
\State Randomly select $E_{pk1}[d]$ from $F$[key];
\State Add $E_{pk1}[d]$ to $S$;
\EndIf
\EndFor
\State \Return $S$
\end{algorithmic}
\end{algorithm}
\noindent\textbf{Phase 1: On-device processing.}
In this phase, each client secures his data through the following process.

{Data encryption with Server's public key:} The private data is encrypted with the server's public key, the cyphertext is denoted as $E_{pk1}[d]$.

{Data hashing:} The private data is then hashed by an hash function, denoted as $H$ (unique across clients). The hash function is identical for all clients. The hashed result is denoted as $H[d]$.

{Encryption with the aux. server's public key} The encrypted data and the hashed result are then encrypted with the aux. Server's public key. The encrypted message is denoted as $v = E_{pk2} [E_{pk1}[d] \boldsymbol{\|} H[d]]$
The on-device process is described in Algorithm \ref{alg:CDP_on_device}. Then the encrypted message is passed on to the shuffler through an end to end encrypted channel.

\noindent\textbf{Phase 2: Shuffler's anonymization and shuffling}
This phase is standard, and identical to the process described in Phase 2 of GCMS, we omit the details for simplicity.

\noindent\textbf{Phase 3: Aux. server's DP protection}
In this phase, the aux. server provides DP protection for releasing the item names. This is achieved via the following process, and is detailed in Algorithm \ref{alg:CDP_aux_server}.

{Decrypt message:} The first step is to decrypt the messages received from the shuffler with the secret key. Then the aux. server observes $\{E_{pk1}[d_1] \boldsymbol{\|} H[d_1],...,E_{pk1}[d_n] \boldsymbol{\|} H[d_n]\}$.

{Hash frequency calculation:} Since each client uses the same hash function, the hashed results from different clients with identical item must be identical. The aux. server calculates the frequency of each hashed result and attaches the corresponding encrypted data to it.

{Add DP noise:}
The aux. server adds Laplacian noise with scale $b$ to frequency of each hashed result. For those hashed results with noisy frequency larger than a threshold $T$, randomly sample from its corresponding encrypted data and release to the server.

\noindent\textbf{Phase 4: Server decrypts messages}
Finally, the server decrypts messages received from the aux. server and obtains the plaintext that contains the item names.

\subsection{Privacy Analysis}

The privacy is immediate from the stability-based histogram \cite{korolova2009releasing}. 
The relationship between the privacy parameters, the threshold $T$ and the scale of the Laplacian mechanism is presented in the following Theorem. 

\begin{thm}[Privacy of Algorithm \ref{alg:CDP_aux_server}]\label{thm:4}
    The released encrypted item set $S$ is $(\epsilon,\delta)$-differentially private with
    \begin{equation}\label{eq:thDP}
        \epsilon=\max\left\{\frac{1}{b},\log\left(1+\frac{1}{2e^{(T-1)/b}-1}\right)\right\},
    \end{equation}
    and 
    \begin{equation}
        \delta=\frac{1}{2}\exp(\epsilon(1-T)).
    \end{equation}
\end{thm}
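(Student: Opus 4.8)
The plan is to prove Theorem~\ref{thm:4} by reducing Algorithm~\ref{alg:CDP_aux_server} to the stability-based histogram of \cite{korolova2009releasing} and analyzing the privacy of its two components—the Laplace-perturbed counts and the thresholded release—on a single histogram bin, since the hash-and-encrypt preprocessing is post-processing-invariant and the encryption layers add no privacy leakage beyond what the noisy counts reveal. First I would fix the neighboring-dataset notion: two input datasets differ in one client's record $d$, so adding or removing that record changes the count of at most one hash bucket $H[d]$ by exactly $1$ (assuming no collisions, or treating collisions as the same bucket). Thus the relevant sensitivity of the per-bin count is $\Delta=1$, and the release $S$ depends on the data only through the collection of noisy counts $c_v+\mathrm{Lap}(b)$ together with the threshold test against $T$.

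The analysis splits into two cases that correspond exactly to the two terms inside the $\max$ in \eqref{eq:thDP}. For a bin whose true count is nonzero in both neighboring datasets (so the bin is present regardless of the differing record), the only effect of the neighbor relation is a shift of the noisy count by $1$; standard Laplace-mechanism reasoning gives a ratio of densities bounded by $e^{\Delta/b}=e^{1/b}$, yielding the first term $\epsilon=1/b$. The delicate case is a bin that is \emph{created} by the presence of the differing record—i.e.\ it has count $1$ in one dataset and count $0$ (absent) in the neighbor. Here the record cannot be hidden by a mere Laplace shift, because in the neighboring dataset the bin is never a candidate for release at all. This is precisely where the threshold $T$ does the work: a bin with true count $1$ is released only when $\mathrm{Lap}(b)\ge T-1$, which happens with the small probability $\tfrac12 e^{-(T-1)/b}$, and this probability bounds the $\delta$ term, $\delta=\tfrac12\exp(\epsilon(1-T))$. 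I would compute the conditional log-ratio for the event that such a low-count bin \emph{is} released and show it is bounded by $\log\!\left(1+\tfrac{1}{2e^{(T-1)/b}-1}\right)$, giving the second term in the $\max$.

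Concretely, the key steps in order are: (i) argue post-processing and encryption add nothing, so it suffices to bound the privacy of the thresholded noisy-count release; (ii) decompose neighboring datasets into the two cases above by which bins can differ; (iii) for bins present in both datasets, bound the likelihood ratio by the Laplace tail-ratio $e^{1/b}$; (iv) for a newly-created singleton bin, split its release probability into a ``caught by $\delta$'' part—bounding $\mathrm{Pr}[\text{release a count-}1\text{ bin}]=\tfrac12 e^{-(T-1)/b}$ and matching it to $\delta=\tfrac12\exp(\epsilon(1-T))$ after substituting the $\epsilon$ expression—and an $\epsilon$-bounded part, where I compute the ratio of release probabilities conditioned on being above threshold and simplify it to $1+\tfrac{1}{2e^{(T-1)/b}-1}$; (v) take the worst case over both terms, which is the $\max$ in \eqref{eq:thDP}. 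The main obstacle I anticipate is step (iv): carefully identifying which tail event is charged to $\delta$ versus bounded by $\epsilon$, and verifying that the algebraic simplification of the conditional Laplace-tail ratio yields exactly $\log\!\left(1+\tfrac{1}{2e^{(T-1)/b}-1}\right)$ rather than a looser bound. I would also double-check that the ``randomly select $E_{pk1}[d]$ from $F[\mathrm{key}]$'' sampling step is genuine post-processing (the \emph{choice of representative} reveals no additional information about the differing record beyond the release decision itself), since a sloppy treatment here could leak the identity of the new bin's contents.
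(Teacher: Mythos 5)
The paper gives no proof of Theorem~4 at all --- it simply defers to the stability-based histogram analysis of Korolova et al.~\cite{korolova2009releasing} --- so your plan is a reconstruction of that standard argument rather than a parallel to anything written in the paper. Your overall decomposition is the intended one: sensitivity-$1$ per-bin counts, the Laplace ratio $e^{1/b}$ for bins present in both neighboring datasets, and a separate treatment of the bin created by the differing record, with the two $\epsilon$ expressions combined by a $\max$.

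The one concrete problem is in your step (iv). For a bin with true count $1$ in $D$ and count $0$ (absent) in $D'$, the \emph{release} event has probability $\tfrac12 e^{-(T-1)/b}$ under $D$ and probability exactly $0$ under $D'$, so its likelihood ratio is infinite; the entire release probability must be absorbed into $\delta$, and there is no ``$\epsilon$-bounded part'' of the release event, conditional or otherwise. The second term in the $\max$ comes from the \emph{complement} event: the ratio of suppression probabilities
\begin{equation*}
\frac{\Pr[\text{bin}\notin S \mid D']}{\Pr[\text{bin}\notin S \mid D]}
=\frac{1}{1-\tfrac12 e^{-(T-1)/b}}
=1+\frac{1}{2e^{(T-1)/b}-1},
\end{equation*}
whose logarithm is exactly the second argument of the $\max$ in \eqref{eq:thDP}. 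As literally written, your plan to ``compute the ratio of release probabilities conditioned on being above threshold'' would not produce this term (the conditioning does not rescue a ratio whose denominator is zero). You correctly flag this as the delicate point, but the resolution is the complement event, not a conditioning. One further caveat you raise is worth keeping: the ``randomly select $E_{pk1}[d]$ from $F[\text{key}]$'' step releases one particular client's ciphertext, so it is post-processing only up to the semantic security of the encryption --- the resulting guarantee is computational rather than information-theoretic, a point the paper also glosses over.
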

The steps for proving Theorem 4 follow directly from \cite{korolova2009releasing}.

\section{Discussion} 

\subsection{Security Analysis}
When each party in the framework is honest but curious, the security and privacy of each client’s data are inherently guaranteed by LDP and encryption. Therefore we put more focus on potential risks if we extend our analysis to the malicious setting. Typically, the adversary aims to accomplish several objectives: (1) extract additional information about the honest clients and their messages, (2) downgrade the utility of the LDP results, and (3) disrupt the protocol, causing parties to abort. 
In a synchronized network setting, if some malicious parties decide to halt the protocol, it can be easily detected and the honest parties can simply exclude those parties from the next execution. So we mainly focus on the first two goals.

\subsubsection{Analysis for GCMS/OCMS}
In the context of our frequency estimation framework, three types of entities exist: clients, the shuffler, and the server. From an attacker’s perspective, we assume all these parties could be malicious, what’s more, different types of parties could collude. 

\textbf{Malicious Clients:} For malicious clients, the worst attack they can do is similar to the data poisoning attack~\cite{279934}: the adversary can send garbage messages as LDP reports, or register a huge amount of fake clients and let them send garbage messages as LDP reports. While truncating clients' maximum contribution can counter the former, the latter type, is considered outside the scope of this paper, as it cannot be mitigated by designing a better LDP protocol. Typically, strategies such as authentication or access control can be employed to counteract such Sybil attack.

\textbf{Malicious Shuffler:} Since all the client messages are encrypted with the server's public key, even a malicious shuffler cannot break the confidentiality of the messages. However, the malicious shuffler can still deviate from the protocol. For instance, the shuffler could launch a data poisoning attack by injecting false information as LDP reports or dropping LDP reports from honest clients. Since the shuffler represents a single point of failure, these attacks are feasible. A solution is to decentralize the shuffling process. For instance, we can deploy a group of servers to run secure multiparty computation (MPC) to achieve oblivious shuffling~\cite{eskandarian2021clarion}, and each MPC server learns nothing about the shuffled message as long as a sufficient portion of the servers are honest. Alternatively, we can also put the shuffler functionality into trust execution environments (TEE)~\cite{chen2017strongly}, which provide verifiable security. To conclude, the shuffler can be forced to behave honestly by proper cryptography approaches or trusted hardware, thus it is safe to consider the shuffler as a semi-honest party.

\textbf{Malicious Server:} The malicious server has no motivation to downgrade the utility as it benefits from better LDP outcomes. Therefore, we only check if the malicious server can infer more information about the clients. However, since all messages accessed by the server are protected by LDP, it is assured that the server cannot learn additional information beyond what is allowed by the LDP protocol.

\textbf{Collusion between parties:} 
When the shuffler colludes with the server, the worst-case scenario is that the shuffler fails to shuffle the messages, thereby eliminating the privacy amplification effect. However, the results of the protocol are still protected by LDP, although at a reduced level of privacy. In cases where the server colludes with a group of malicious clients, the server still gets no extra information of the rest of the clients due to the LDP guarantee. When the shuffler colludes with malicious clients, the worst-case situation is that the shuffler consistently accepts inputs from the malicious clients while refusing to serve the honest ones. This issue cannot be addressed through LDP design alone. It can be mitigated by implementing measures such as enforced scheduling or adopting a decentralized shuffling solution, such as MPC.

\subsubsection{Analysis for the aux. server}

Our unknown item discovery framework introduces another entity: the aux. server. Encrypting the client messages with the server's public key guarantees that the aux. server gains no information about the content of the messages. However, the aux. server can compute the hash of the specific value and compare it with the hash in each client message. In this way, the aux. server can check whether the client message is a specific value. Therefore, we require the message space to be large enough so that the aux. server cannot brute force the whole space. Another aspect is that the aux. server can also purposely decrease the frequency of a specific value. To do so, the aux. server simply computes the hash of the value and drops all messages with the same hash. Finally, the aux. server can also obtain a list of frequencies, but without knowing which frequency corresponds to which item. Because the messages are shuffled by the shuffler, the aux. server has no information regarding who sent which messages.

To conclude, the malicious aux. server could significantly change the result of the protocol, therefore, we suggest deploying the aux. server with an trusted authority, or implementing the aux. server in trust execution environments (TEE)~\cite{chen2017strongly}, which enforces the protocol execution to eliminate the extra information leakage, and meanwhile ensures that the aux. server is not involved in any collusion.


\subsection{Remarks for Implementation}

\textbf{Choice of threshold and the privacy-utility trade-off.}
The choice of $T$ and $b$ is critical in balancing the utility-privacy trade-off. 
The privacy guarantee becomes stronger with larger $T$ and $b$, while $T$ and $b$ determine the utility from different perspectives: a larger $T$
ensures that only popular data is revealed to the server, which is not ideal if the server's goal is to collect as much unknown data as possible. Conversely, a large $b$ introduces more more randomness, reducing accuracy when the server aims to learn the most popular data strings.
Therefore, the choice of $T$ and $b$ should align with the specific application scenarios.

\textbf{Joint usage of the two frameworks.} The two data collection frameworks proposed in this paper can be used jointly in a more on-demand manner in practice. For example, the server can collect popular data strings through the unknown domain collection framework and then check their frequency with the sketch matrix in the GCMS. The server can also track the popularity of targeted data by calculating an optimal $p$ and $s$ according to OCMS with previously estimated frequency, and send $p$ and $s$ to each client as their mechanism parameters in the next iteration. 
Another use-case example is the server maintaining a list of top-$k$ popular items. The server can fine-tune $p$ and $s$ according to the frequency regime of the required popularity and send them to each client. Each time the server obtains a list of data strings from the unknown domain collection framework, the server checks the frequency and updates the list. The overall privacy guarantee is obtained by composition of the privacy of GCMS amplified by shuffler (Theorem \ref{thm.shuffle}) and the privacy of data string collection (Theorem \ref{thm:4}).


\section{Key Results Visualization and Experiments}

This section visualizes the key results and experimental evaluation for our GCMS, OCMS and related works. We start by visualizing the accuracy in terms of variance, demonstrating that our approach can achieve lower variance for items with specific frequencies. Additionally, we identify the optimal choices of $p$ under various parameter settings. Finally, we conduct tests of our approach and related works using real-world databases, providing a comprehensive comparison of their performances.

\subsection{Utility of GCMS}
We use variance as the metric to measure accuracy, which can be directly translated to the mean square error of the estimation. To provide better visualization, we slightly modify original variance to relative square root variance, which is in the same order as the absolute error or relative error, and is defined as:
\begin{equation*}
    \sqrt{\text{Var}(\hat{f}(d))}\big/f(d).
\end{equation*}
This modification does not affect the optimality of the mechanism.  There are multiple factors jointly determining the variance, such as $k$ (the number of hash function), $m$ domain (hash domain size), $f(d)$ (true frequency of the queried item), and $n$ (number of LDP reports). 
Figure~\ref{fig:var_ours_cms} shows the accuracy of our approach and Apple's CMS under different parameter settings. We fix the number of LDP reports to be $100,000$, and the communication costs of different mechanisms at the same level. Then, we focus on two target query items with true frequencies $f(d)$ of $1000$ and $10000$. The results demonstrate that our approach achieves lower variance when the privacy budget $\epsilon$ is not excessively large. Apple's CMS outperforms our approach only when the $f(d)$ of the queried item is considerably large and the privacy budget is extremely high, which rarely happens in real-world scenarios. 
\begin{figure}[t]
\centering 
\subfigure[$n = 100000$, $f(d) = 1000$.]
{\includegraphics[width=0.45\textwidth]{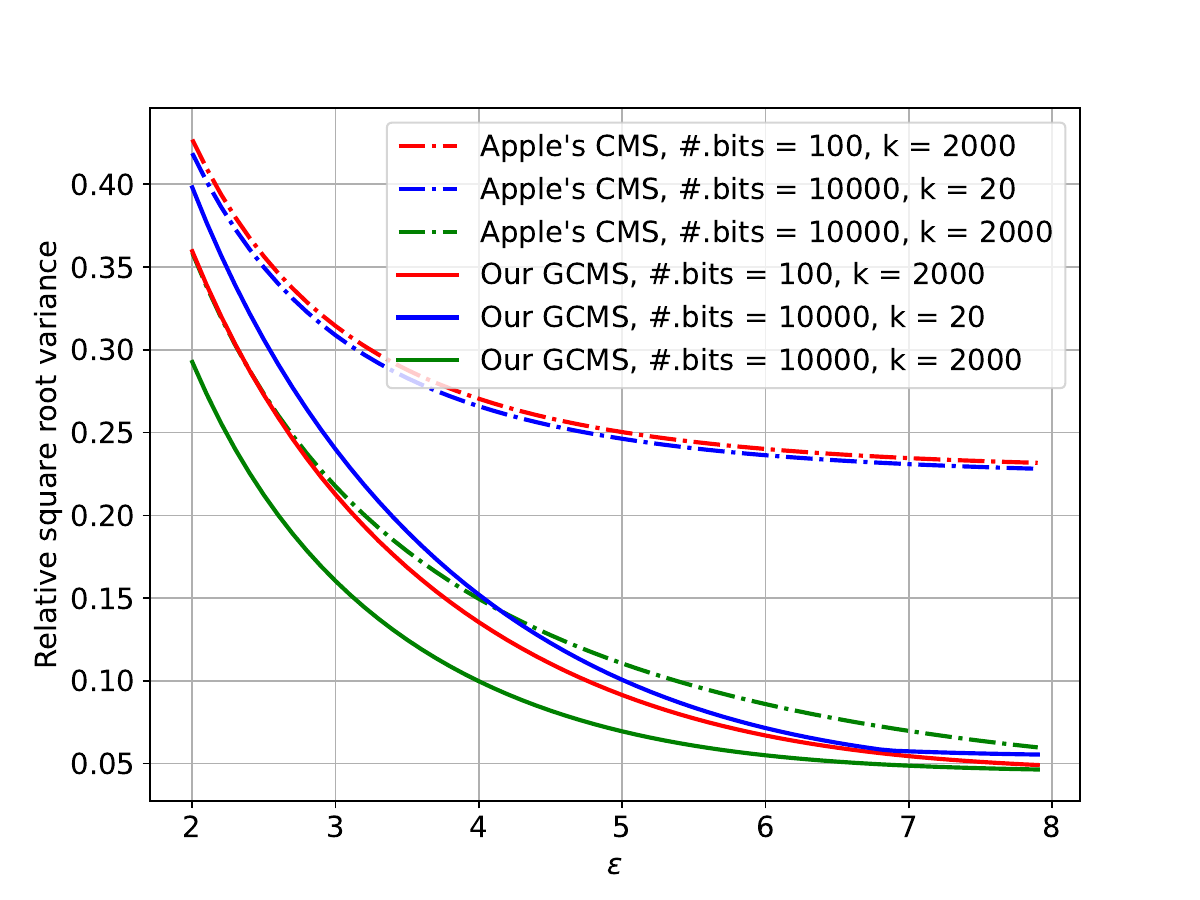}
\label{fig:var_fd_1000}}
\subfigure[$n = 100000$, $f(d) = 10000$.]
{\includegraphics[width=0.45\textwidth]{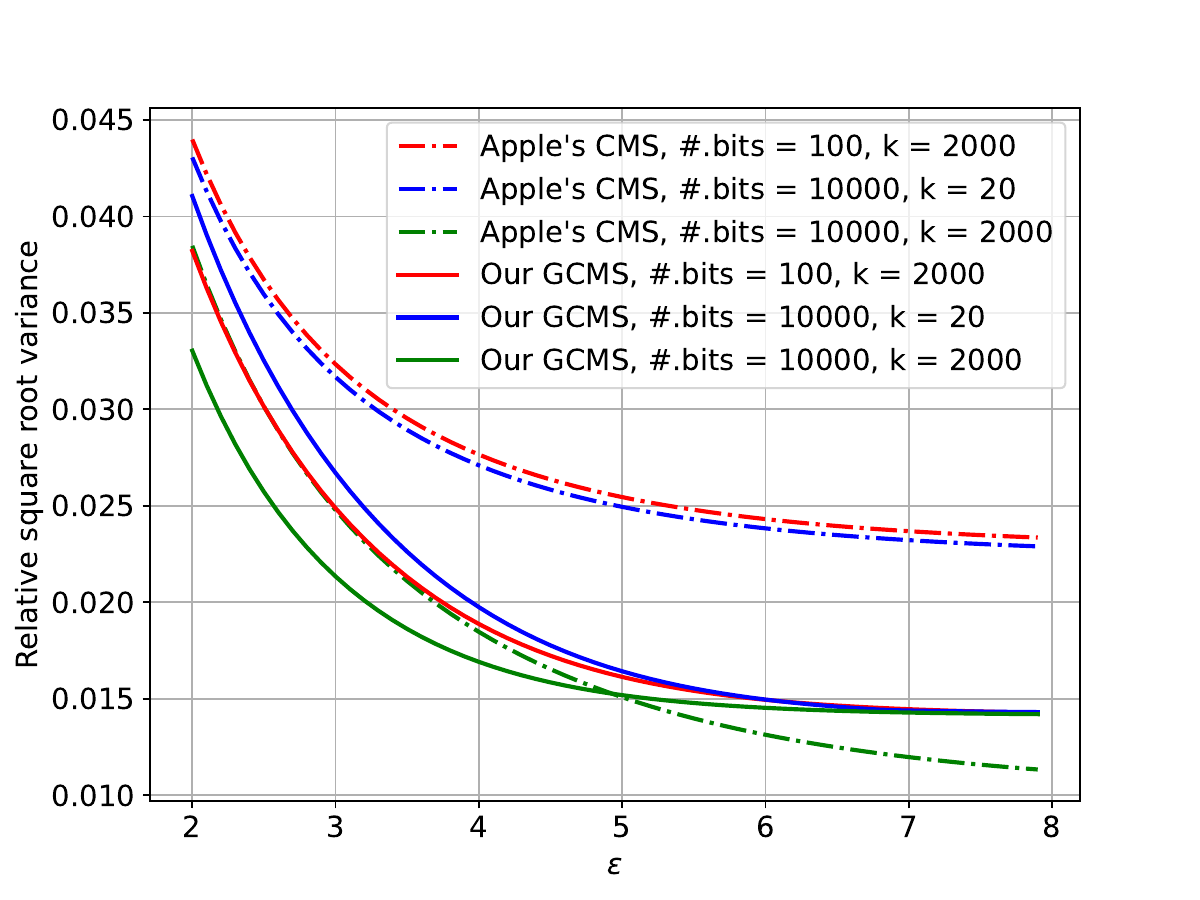}
\label{fig:var_fd_10000}}
\caption{Variance comparison between our approach and Apple's CMS under different epsilons. Lines with the same color indicate the same parameter setting. \#.bits represents the bit length of the LDP reports, which is directly related to the hash function range $m$. $k$ is the number of hash functions, $n$ is the total number of LDP reports, and $f(d)$ is the true frequency of the queried item. The perturbation parameter $p$ for our approach is $0.5$.}
\label{fig:var_ours_cms}
\vspace{-10pt}
\end{figure}




\subsection{Utility of OCMS}
Lemma~\ref{lemma:opt_p} suggests that the perturbation parameter $p$ can be adjusted for a specific frequency $f(d)$ to minimize the variance for target frequency. 
We conduct experiments to compute optimal selection of $p$ under various parameter settings, and the results are presented in Figure~\ref{fig:opt_p_experiments}. The findings reveal an interesting observation: when $\epsilon$ is small, the optimal $p$ remains $0.5$ across a wide range of $f(d)$. This is consistent with the assertions in \cite{203872} that the optimal choice of $p$ is 0.5 when the privacy budget is small. However, as the privacy budget increases, the optimal $p$ begins to diverge from $0.5$, increasing with $\epsilon$ and $f(d)$ beyond a certain threshold. Therefore, our solution can also be viewed as generalization of the optimal unary encoding approach, offering improved performance when a larger $\epsilon$ is available. Figure~\ref{fig:p_k_experiments} illustrates the relationship between the optimal $p$ and the number of hash functions $k$. generally, a larger $k$ corresponds to a higher optimal $p$. This trend indicates that our approach yields greater benefits when implemented with a more complex hashing mechanism. 


\begin{figure}[t]
\centering 
\subfigure[Optimal selection of $p$ given different $\epsilon$.]
{\includegraphics[width=0.45\textwidth]{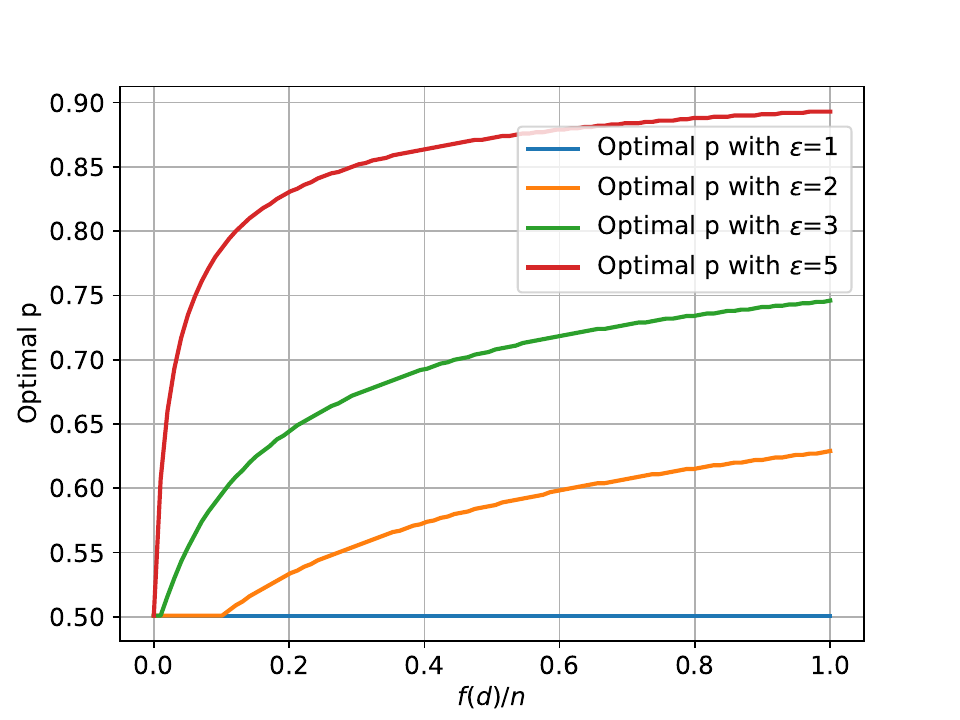}
\label{fig:p_experiments}}
\subfigure[Optimal selection of $p$ given different $k$.]
{\includegraphics[width=0.45\textwidth]{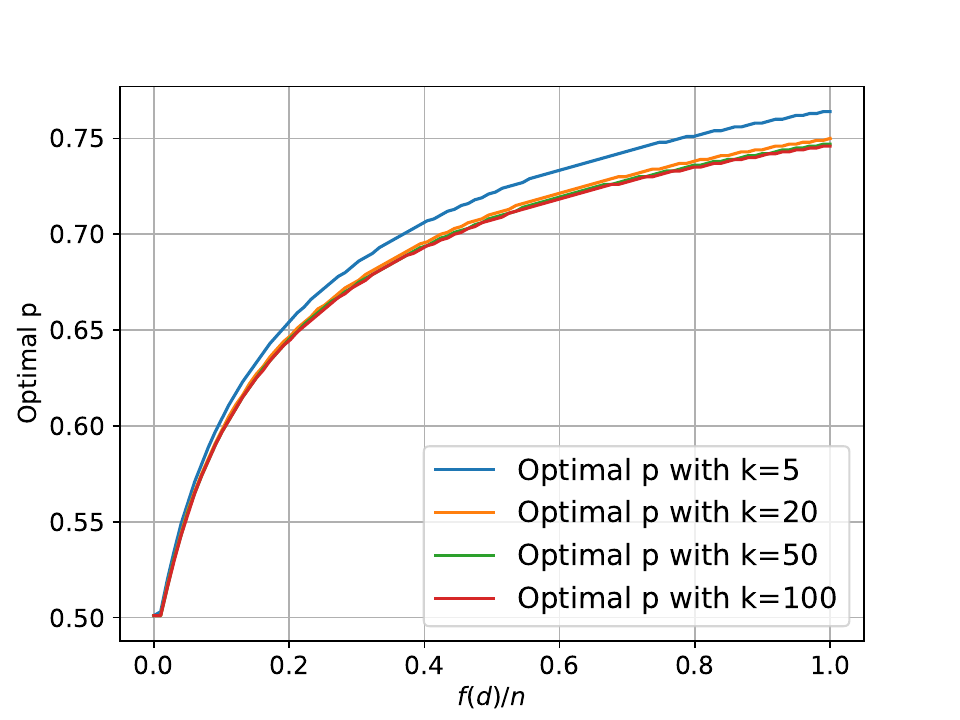}
\label{fig:p_k_experiments}}
\caption{Optimal selection of $p$ in various parameter settings. $n = 10000, k = 200, m = 200$.}
\label{fig:opt_p_experiments}
\vspace{-10pt}
\end{figure}

With the optimal selection of $p$, our approach has more advantages when we are interested in querying items within a specific frequency range. By adjusting the parameter $p$, our approach can achieve better accuracy for a target frequency, resulting in lower variance for query items with similar frequencies. Figure~\ref{fig:fd_experiments} displays the best theoretical variance achievable for each target frequency, with the variance of Apple's CMS with a fixed $p$ serving as the baseline. To ensure a fair comparison, we use the same parameter settings as in Apple's CMS experiments: $m = 1024$, $k = 65536$, $\epsilon = 4$. The results indicate that our approach significantly improves variance across all target frequencies. 


\begin{figure}[t]
    \centering
    \includegraphics[width=0.45\textwidth]{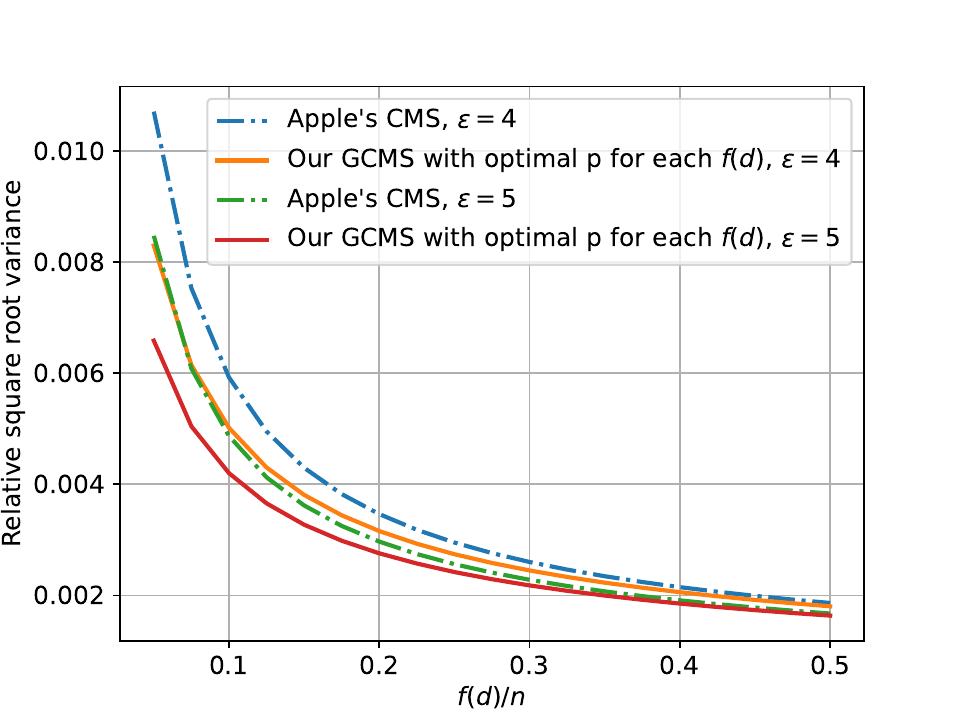}
    \caption{Variance of our approach with optimized $p$ for each target frequency $f(d)$. $n = 1000000, k = 65536, m = 1024$.}
    \label{fig:fd_experiments}
    \vspace{-10pt}
\end{figure}
\begin{figure}[t]
    \centering
    \includegraphics[width=0.45\textwidth]{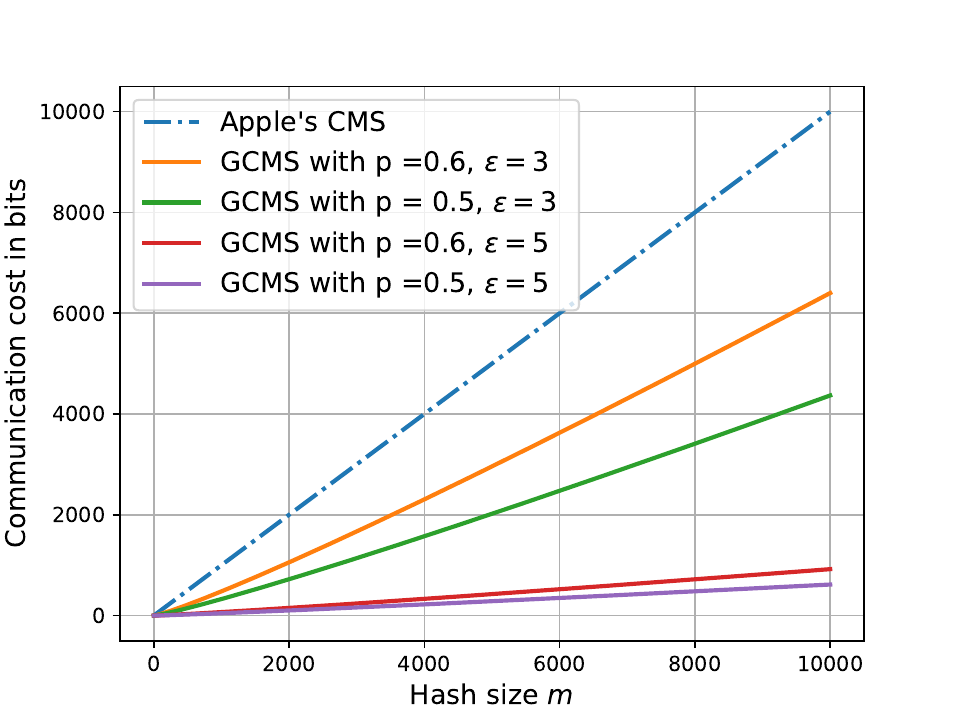}
    \caption{Communication cost comparison between GCMS and Apple's CMS under fixed $\epsilon$s and $p$s.}
    \label{fig:communication}
\vspace{-10pt}
\end{figure}
\subsection{Communication Costs}
\vspace{-5pt}
We compare the communication costs of GCMS against with that of Apple's CMS. 
As we described in Remark 1, the communication cost of the GCMS is $\mathcal{O}(s \log m)$, compared to $\mathcal{O}(m)$ for Apple's CMS. Note that while the communication cost for Apple's CMS only depends on the hash size $m$ and is independent of the mechanism parameters, the communication cost of GCMS depends on $s$, which can be viewed as a function of $\epsilon$ and $p$.

In Fig. \ref{fig:communication}, we present the communication costs when varying $p$ and $\epsilon$. We fix the privacy parameter $\epsilon$ to be $3$ and $5$, respectively, and $p$ to be $0.5$ and $0.6$, corresponding to the $p$ in OUE-LDP and in OCMS for some targeted frequency. We then vary the hash size $m$ from $0$ to $10,000$ and obtain the corresponding communication costs. We can observe that our subset-selection feature improves the communication costs significantly. Additionally, a small $p$ and large $\epsilon$ can further reduce the communication cost in GCMS.

\subsection{Experiments with Real-world Datasets}

We conduct experiments with the ``Adult Census Income" dataset\cite{misc_adult_2}, which contains census information with $48,842$ records and $15$ attributes. 
Our focus is to count the frequency of the "education" column, which contains $16$ possible values with frequencies ranging from $80$ to $16,000$. We use Python code to simulate the operations on both the client side and the server side, and ask the server to reconstruct the frequencies of all $16$ possible values. 
We target the highest frequency value of $\frac{n}{3} = 16,000$ and apply Lemma~\ref{lemma:opt_p} to determine an optimal $p$ of $0.74$. For mid-range frequency values, we pick $p = 0.57$ to achieve the best performance for items with a frequency around $1500$. Additionally, we include $p = 0.87$ as the baseline because it corresponds to the parameter setting used in Apple's CMS, which is treated as a special case in our framework. The result is depicted in Figure~\ref{fig:adults_experiments_optimized}. We observe that each specified case achieves lower variance at its target frequency, though the performance slightly decreases for non-target frequencies. Specifically, in the case of Apple's CMS, the resulting target frequency exceeds $n$ due to the large $p$. Consequently, our approach demonstrates better performance almost the entire range of true frequencies.
\begin{figure}[t]
    \centering
    \includegraphics[width=0.47\textwidth]{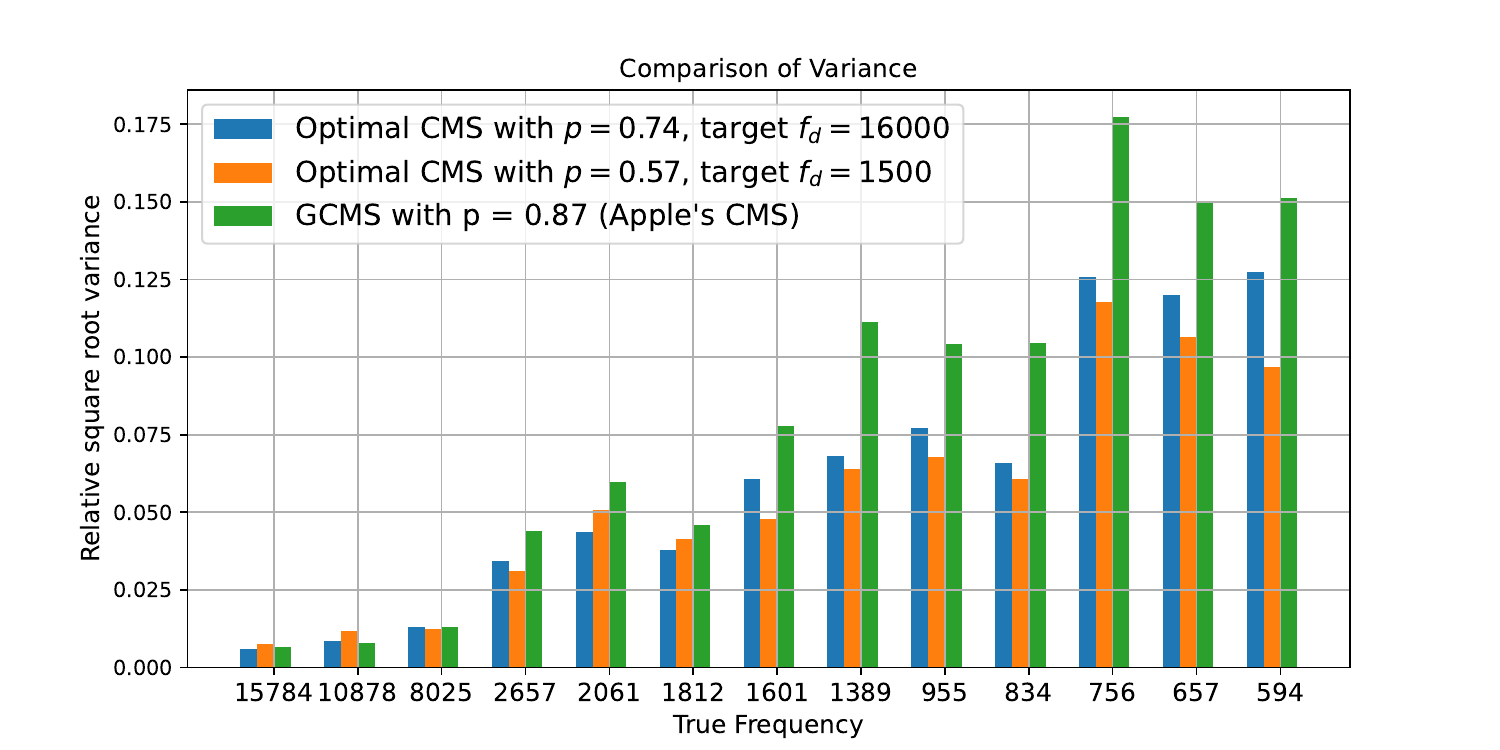}
    \caption{Variance comparison with the ``Adult Census Income" dataset with various $p$. Experiment setting: $n = 48000, m = 100, k = 100$. $\epsilon = 3.64$ for the $p = 0.74$ case, and  $\epsilon = 3.75$ for the rest two cases.}
    \label{fig:adults_experiments_optimized}
\end{figure}

\begin{figure}[t]
    \centering
    \includegraphics[width=0.38\textwidth]{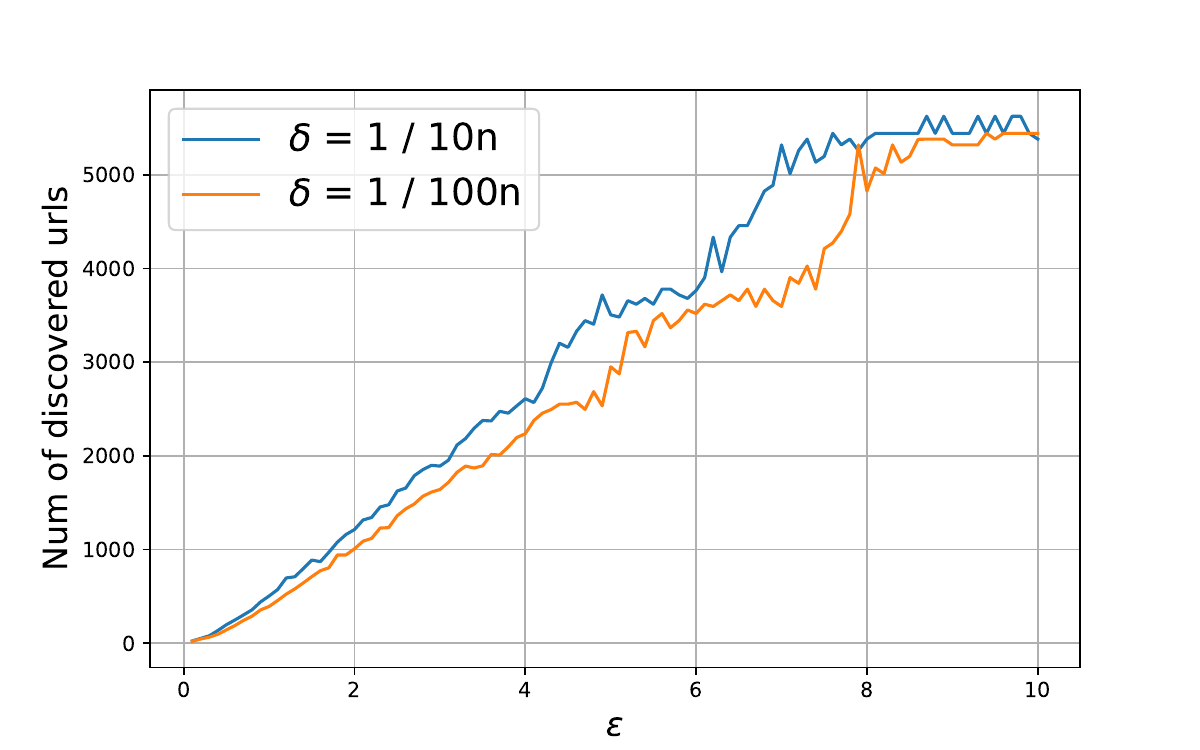}
    \caption{Relationship between privacy budget and the amount of discovered url with our approach using Kaggle's URL dataset. Experiment setting: $n = 100000$. Each experiment is repeated 30 times, and the mean value is reported.}
    \label{fig:url_experiments}
\end{figure}
\subsection{Unknown Domain Data Collection}
We conduct experiments for our unknown domain DP solution with Kaggle's URL dataset\footnote{Link: \url{https://www.kaggle.com/datasets/teseract/urldataset?resource=download}}. For convenience, we select the first $100,000$ data records and limited our analysis to the first $20$ characters of each URL. This is because many URLs contain a client-ID-like string towards the end, making them unique and thus difficult to analyze effectively for frequency-based approaches. We observe that as the privacy budget $\epsilon$ decreases, the threshold $T$ increases. Consequently, a smaller number of URLs is collected, reflecting a higher level of privacy protection at the expense of data availability. This trend is clearly illustrated in Figure~\ref{fig:url_experiments}. Additionally, as $\epsilon$ increases, the number of collected URLs tends to converge. The convergence number (approximately $5,500$ URLs) still significantly differs from the total number of distinct URLs (around $40,000$). This indicates that even at higher levels of $\epsilon$, the method does not capture all unique URLs. The underlying reason is that, according to Theorem 4, $T$ is approximately $1 - \log(2\delta)/\epsilon$, which is greater than 1 for most values of $\epsilon$. Given that the majority of URLs in the dataset have a count of 1 and the Laplacian mechanism adds zero-mean noise, most of these URLs remain unrevealed. This feature ensures that user data with outliers is intractable.

It is worth noting that our protocol offers significant improvement in accuracy. 
We also conduct experiments on the Twitter dataset, following \cite{8509300}. When compared to other existing LDP unknown domain data collection methods, PrivTrie is the only
solution that obtains a practical level of accuracy (over 0.8 F-score) under $\epsilon = 2$.
Under these same conditions, our unknown domain protocol achieves an F-score of 0.962, demonstrating a notable improvement in accuracy. However, we note that this is not a completely fair comparison, as all the methods mentioned are pure LDP protocols, whereas our approach is quasi-LDP, leveraging cryptographic techniques.

\section{Conclusion}
In this paper, we present a comprehensive framework for privacy-preserving data collection and frequency estimation, utilizing the Encryption-Shuffling-Analysis architecture. We introduce a Generalized Count Mean Sketch (GCMS) protocols that captures and optimizes various existing frequency estimation protocols. We provide its privacy and security analysis, as well as a general utility analysis that enables optimal parameter design, leading to the development of the Optimal Count Mean Sketch (OCMS) which minimizes variance for targeted frequency regimes. 
Additionally, we propose a protocol for data collection with unknown domains, addressing scenarios where the data domain is not predefined. Our approach achieves accuracy comparable to the central differential privacy (DP) model while providing local-like privacy guarantees and substantially lowering computational costs.
We also visualize our theoretical analysis and provide extensive experimental results demonstrating the effectiveness of these protocols in terms of utility-privacy trade-offs and communication cost.

\bibliographystyle{IEEEtran}
\bibliography{ref_Bo,ref_wanrong,ref_donghang,ref_jian}

\begin{thebibliography}{10}
\providecommand{\url}[1]{#1}
\csname url@samestyle\endcsname
\providecommand{\newblock}{\relax}
\providecommand{\bibinfo}[2]{#2}
\providecommand{\BIBentrySTDinterwordspacing}{\spaceskip=0pt\relax}
\providecommand{\BIBentryALTinterwordstretchfactor}{4}
\providecommand{\BIBentryALTinterwordspacing}{\spaceskip=\fontdimen2\font plus
\BIBentryALTinterwordstretchfactor\fontdimen3\font minus \fontdimen4\font\relax}
\providecommand{\BIBforeignlanguage}[2]{{%
\expandafter\ifx\csname l@#1\endcsname\relax
\typeout{** WARNING: IEEEtran.bst: No hyphenation pattern has been}%
\typeout{** loaded for the language `#1'. Using the pattern for}%
\typeout{** the default language instead.}%
\else
\language=\csname l@#1\endcsname
\fi
#2}}
\providecommand{\BIBdecl}{\relax}
\BIBdecl

\bibitem{erlingsson2014rappor}
{\'U}.~Erlingsson, V.~Pihur, and A.~Korolova, ``Rappor: Randomized aggregatable privacy-preserving ordinal response,'' in \emph{Proceedings of the 2014 ACM SIGSAC conference on computer and communications security}, 2014, pp. 1054--1067.

\bibitem{45382}
G.~Fanti, V.~Pihur, and Úlfar Erlingsson, ``Building a rappor with the unknown: Privacy-preserving learning of associations and data dictionaries,'' \emph{Proceedings on Privacy Enhancing Technologies (PoPETS)}, vol. issue 3, 2016, 2016.

\bibitem{apple2017learning}
A.~Differential Privacy~Team, ``Learning with privacy at scale,'' \emph{https: //machinelearning.apple.com/2017/12/06/ learning-with-privacy-at-scale.html}, 2017.

\bibitem{10.5555/3294996.3295115}
B.~Ding, J.~Kulkarni, and S.~Yekhanin, ``Collecting telemetry data privately,'' in \emph{Proceedings of the 31st International Conference on Neural Information Processing Systems}, ser. NIPS'17.\hskip 1em plus 0.5em minus 0.4em\relax Red Hook, NY, USA: Curran Associates Inc., 2017, p. 3574–3583.

\bibitem{wang2017locally}
T.~Wang, J.~Blocki, N.~Li, and S.~Jha, ``Locally differentially private protocols for frequency estimation,'' in \emph{26th USENIX Security Symposium (USENIX Security 17)}, 2017, pp. 729--745.

\bibitem{Rappor}
Úlfar Erlingsson, V.~Pihur, and A.~Korolova, ``Rappor: Randomized aggregatable privacy-preserving ordinal response,'' in \emph{Proceedings of the 21st ACM CCS}, 2014.

\bibitem{8509300}
N.~Wang, X.~Xiao, Y.~Yang, T.~D. Hoang, H.~Shin, J.~Shin, and G.~Yu, ``Privtrie: Effective frequent term discovery under local differential privacy,'' in \emph{2018 IEEE 34th International Conference on Data Engineering (ICDE)}, 2018, pp. 821--832.

\bibitem{korolova2009releasing}
A.~Korolova, K.~Kenthapadi, N.~Mishra, and A.~Ntoulas, ``Releasing search queries and clicks privately,'' in \emph{Proceedings of the 18th international conference on World wide web}, 2009, pp. 171--180.

\bibitem{DMNS06}
C.~Dwork, F.~McSherry, K.~Nissim, and A.~Smith, ``Calibrating noise to sensitivity in private data analysis,'' in \emph{Proceedings of the 3rd Conference on Theory of Cryptography}, ser. TCC '06, 2006, pp. 265--284.

\bibitem{Kasiviswanathan2008WhatCW}
S.~P. Kasiviswanathan, H.~K. Lee, K.~Nissim, S.~Raskhodnikova, and A.~D. Smith, ``What can we learn privately?'' \emph{2008 49th Annual IEEE Symposium on Foundations of Computer Science}, pp. 531--540, 2008.

\bibitem{10.5555/3310435.3310586}
U.~Erlingsson, V.~Feldman, I.~Mironov, A.~Raghunathan, K.~Talwar, and A.~Thakurta, ``Amplification by shuffling: from local to central differential privacy via anonymity,'' in \emph{Proceedings of the Thirtieth Annual ACM-SIAM Symposium on Discrete Algorithms}, ser. SODA '19.\hskip 1em plus 0.5em minus 0.4em\relax USA: Society for Industrial and Applied Mathematics, 2019, p. 2468–2479.

\bibitem{feldman2023stronger}
V.~Feldman, A.~McMillan, and K.~Talwar, ``Stronger privacy amplification by shuffling for r{\'e}nyi and approximate differential privacy,'' in \emph{Proceedings of the 2023 Annual ACM-SIAM Symposium on Discrete Algorithms (SODA)}.\hskip 1em plus 0.5em minus 0.4em\relax SIAM, 2023, pp. 4966--4981.

\bibitem{203872}
T.~Wang, J.~Blocki, N.~Li, and S.~Jha, ``Locally differentially private protocols for frequency estimation,'' in \emph{26th USENIX Security Symposium (USENIX Security 17)}.\hskip 1em plus 0.5em minus 0.4em\relax Vancouver, BC: USENIX Association, Aug. 2017, pp. 729--745.

\bibitem{Extreme_ldp}
P.~Kairouz, S.~Oh, and P.~Viswanath, ``Extremal mechanisms for local differential privacy,'' in \emph{Advances in Neural Information Processing Systems 27}.\hskip 1em plus 0.5em minus 0.4em\relax Curran Associates, Inc., 2014, pp. 2879--2887.

\bibitem{bassily2015local}
R.~Bassily and A.~Smith, ``Local, private, efficient protocols for succinct histograms,'' in \emph{Proceedings of the forty-seventh annual ACM symposium on Theory of computing}, 2015, pp. 127--135.

\bibitem{wang2016mutual}
S.~Wang, L.~Huang, P.~Wang, Y.~Nie, H.~Xu, W.~Yang, X.-Y. Li, and C.~Qiao, ``Mutual information optimally local private discrete distribution estimation,'' \emph{arXiv preprint arXiv:1607.08025}, 2016.

\bibitem{10.5555/3045390.3045647}
P.~Kairouz, K.~Bonawitz, and D.~Ramage, ``Discrete distribution estimation under local privacy,'' in \emph{Proceedings of the 33rd International Conference on International Conference on Machine Learning - Volume 48}, ser. ICML'16.\hskip 1em plus 0.5em minus 0.4em\relax JMLR.org, 2016, p. 2436–2444.

\bibitem{feldman2022private}
V.~Feldman, J.~Nelson, H.~Nguyen, and K.~Talwar, ``Private frequency estimation via projective geometry,'' in \emph{International Conference on Machine Learning}.\hskip 1em plus 0.5em minus 0.4em\relax PMLR, 2022, pp. 6418--6433.

\bibitem{aydin2024adaptive}
S.~Aydin and S.~Yildirim, ``Adaptive online bayesian estimation of frequency distributions with local differential privacy,'' \emph{arXiv preprint arXiv:2405.07020}, 2024.

\bibitem{10179389}
H.~Fang, L.~Chen, Y.~Liu, and Y.~Gao, ``Locally differentially private frequency estimation based on convolution framework,'' in \emph{2023 IEEE Symposium on Security and Privacy (SP)}, 2023, pp. 2208--2222.

\bibitem{9826894}
X.~Chen, C.~Wang, J.~Cui, Q.~Yang, T.~Hu, and C.~Jiang, ``Incorporating prior knowledge in local differentially private data collection for frequency estimation,'' \emph{IEEE Transactions on Big Data}, vol.~9, no.~2, pp. 499--511, 2023.

\bibitem{279934}
Y.~Wu, X.~Cao, J.~Jia, and N.~Z. Gong, ``Poisoning attacks to local differential privacy protocols for {Key-Value} data,'' in \emph{31st USENIX Security Symposium (USENIX Security 22)}.\hskip 1em plus 0.5em minus 0.4em\relax Boston, MA: USENIX Association, Aug. 2022, pp. 519--536.

\bibitem{eskandarian2021clarion}
S.~Eskandarian and D.~Boneh, ``Clarion: Anonymous communication from multiparty shuffling protocols,'' \emph{Cryptology ePrint Archive}, 2021.

\bibitem{chen2017strongly}
J.~Chen, Y.~Tang, and H.~Zhou, ``Strongly secure and efficient data shuffle on hardware enclaves,'' in \emph{Proceedings of the 2nd Workshop on System Software for Trusted Execution}, 2017, pp. 1--6.

\bibitem{misc_adult_2}
B.~Becker and R.~Kohavi, ``{Adult},'' UCI Machine Learning Repository, 1996, {DOI}: https://doi.org/10.24432/C5XW20.

\end{thebibliography}

\appendices

\section{Proof of Theorem 2}
Denote $r$ and $r'$ as the hashed results of two different input $d$ and $d'$. Note that $r$ and $r'$ can be resulted from different hash functions. We next examine the likelihood ratio of $r$ and $r'$ resulting the same $\mathbf{x}$. 

\noindent\textbf{Case 1:} $r\in\mathbf{x}$ and $r' \in\mathbf{x}$:

\begin{equation*}
\begin{aligned}
    \frac{\text{Pr}(X = \mathbf{x} | D = d)}{\text{Pr}(X = \mathbf{x} | D = d')} =& \frac{\text{Pr}(X = \mathbf{x} | R = r)}{\text{Pr}(X = \mathbf{x} | R = r')}\\
    =&\frac{p/\binom{m-1}{s-1}}{p/\binom{m-1}{s-1}} = 1.
\end{aligned}
\end{equation*}

\noindent\textbf{Case 2: }$r\notin\mathbf{x}$ and $r' \notin\mathbf{x}$:
\begin{equation*}
\begin{aligned}
    \frac{\text{Pr}(X = \mathbf{x} | D = d)}{\text{Pr}(X = \mathbf{x} | D = d')} =& \frac{\text{Pr}(X = \mathbf{x} | R = r)}{\text{Pr}(X = \mathbf{x} | R = r')}\\
    =&\frac{(1-p)/\binom{m-1}{s}}{(1-p)/\binom{m-1}{s}} = 1.
\end{aligned}
\end{equation*}

\noindent\textbf{Case 3: }$r\in\mathbf{x}$ and $r' \notin\mathbf{x}$:
\begin{equation*}
\begin{aligned}
    \frac{\text{Pr}(X = \mathbf{x} | D = d)}{\text{Pr}(X = \mathbf{x} | D = d')} =& \frac{\text{Pr}(X = \mathbf{x} | R = r)}{\text{Pr}(X = \mathbf{x} | R = r')}\\
    =&\frac{p\binom{m-1}{s - 1}}{(1-p)\binom{m-1}{s}}\\
    =&\frac{p}{1-p} \frac{\binom{m-1}{s}}{\binom{m-1}{s-1}}\\
    =&\frac{p}{1-p} \frac{(m-1)!/s!(m-s-1)!}{(m-1)!/(s-1)!(m-s)!}\\
    =&\frac{p}{1-p} \frac{(s-1)!(m-s)!}{s!(m-s-1)!}\\
    =&\frac{p}{1-p}\frac{m-s}{s}.
\end{aligned}
\end{equation*}

\noindent\textbf{Case 4: }$r\notin\mathbf{x}$ and $r' \in\mathbf{x}$:

Similar to case 3. 
\begin{equation*}
\begin{aligned}
    \frac{\text{Pr}(X = \mathbf{x} | D = d)}{\text{Pr}(X = \mathbf{x} | D = d')} =& \frac{\text{Pr}(X = \mathbf{x} | R = r)}{\text{Pr}(X = \mathbf{x} | R = r')}\\
    =&\frac{(1-p)\binom{m-1}{s}}{p\binom{m-1}{s-1}}\\
    =&\frac{1-p}{p}\frac{s}{m-s}.
\end{aligned}
\end{equation*}
Therefore, when $p\in[0.5,1]$ and $s\in[m/2, m]$
\begin{equation*}
    \log\left\{\frac{\text{Pr}(X = \mathbf{x} | D = d)}{\text{Pr}(X = \mathbf{x} | D = d')}\right\}\le \log\left\{\frac{p}{1-p}\frac{m-s}{s}\right\}.
\end{equation*}
This completes the proof of Theorem 1.

\section{Proof of Theorem 3}

For simplicity, let $M_j[h_j[d]]=\sum_{i=1}^n Z_j^{(i)}(d)$. Here,
\begin{align*}
Z_j^{(i)}(d)=\mathbbm{1}_{\{J_i = j\}}\left\{ B(\mathbbm{1}_{\{d^{(i)} =d\}}+\mathbbm{1}_{\{d^{(i)} \neq d\}}\mathbbm{1}_{\{h_j(d^{(i)}) = h_j(d)\}}) \right. \\ \left. +Y \mathbbm{1}_{\{d^{(i)} \neq d\}}\mathbbm{1}_{\{h_j(d^{(i)}) \neq h_j(d)\}}   \right\}, 
\end{align*}
where $B$ and $Y$ are Bernoulli random variables representing the randomized response: $B=1$ with probability $p$ and $B=0$ with probability $1-p$; $Y=1$ with probability $q$ and $Y=0$ with probability $1-q$.

Before we prove Theorem 3, let us first prove several helping lemmas. 

\begin{lem}\label{lem.exp}
$$E[Z_j^{(i)}(d)]=\frac{p}{k}\mathbbm{1}_{\{d^{(i)} =d\}}+\left(\frac{p}{mk}+(1-\frac{1}{m})\frac{q}{k}\right)\mathbbm{1}_{\{d^{(i)} \neq d\}}.$$
\end{lem}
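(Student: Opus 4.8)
The plan is to compute the expectation by conditioning on the two independent sources of randomness governing $Z_j^{(i)}(d)$: the client's choice of hash index $J_i$ and the randomized-response bits $B,Y$. First I would observe that the indicator $\mathbbm{1}_{\{J_i = j\}}$ is independent of $B$, $Y$, and of the hash-collision event $\{h_j(d^{(i)}) = h_j(d)\}$, since each client draws $J_i$ uniformly from $[k]$ independently of the perturbation step. Hence the factor $E[\mathbbm{1}_{\{J_i = j\}}] = 1/k$ pulls out, and it remains to evaluate the expectation of the bracketed term.

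Next I would split on the two cases $d^{(i)} = d$ and $d^{(i)} \neq d$, which are deterministic given the client's data, so exactly one indicator is active. When $d^{(i)} = d$ the bracket reduces to $B$, whose expectation is $p$, yielding the contribution $\tfrac{p}{k}\mathbbm{1}_{\{d^{(i)} = d\}}$. When $d^{(i)} \neq d$ the bracket equals $B\,\mathbbm{1}_{\{h_j(d^{(i)}) = h_j(d)\}} + Y\,\mathbbm{1}_{\{h_j(d^{(i)}) \neq h_j(d)\}}$, which I would handle by conditioning on the collision event. Invoking the modeling assumption that $h_j$ sends distinct inputs to uniform, independent values in $[m]$, so that $\Pr(h_j(d^{(i)}) = h_j(d)) = 1/m$ whenever $d^{(i)} \neq d$, and using $E[B]=p$, $E[Y]=q$ together with the independence of $B,Y$ from the collision event, the conditional expectation of the bracket becomes $\tfrac{1}{m}p + \left(1-\tfrac{1}{m}\right)q$.

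Finally I would multiply each case by the factor $1/k$ coming from the hash-index choice and recombine through the indicators to obtain
$$E[Z_j^{(i)}(d)] = \frac{p}{k}\mathbbm{1}_{\{d^{(i)} = d\}} + \left(\frac{p}{mk} + \left(1-\frac{1}{m}\right)\frac{q}{k}\right)\mathbbm{1}_{\{d^{(i)} \neq d\}},$$
which is the claimed identity. The computation is routine linearity of expectation; the only point demanding care is making the independence structure explicit and justifying the collision probability $1/m$, which is precisely where the uniform-hashing idealization enters. I anticipate no genuine obstacle beyond stating these modeling assumptions cleanly, so that each conditioning step is valid.
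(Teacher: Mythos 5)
Your proposal is correct and follows essentially the same route as the paper, which disposes of the lemma in one line by substituting the collision probability $\Pr(h_j(d^{(i)})=h_j(d))=\tfrac{1}{m}$ together with $E[\mathbbm{1}_{\{J_i=j\}}]=\tfrac{1}{k}$, $E[B]=p$, $E[Y]=q$ into the definition of $Z_j^{(i)}(d)$. Your version merely spells out the independence structure and case split that the paper leaves implicit.
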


\begin{proof}
By substituting into $\mathbbm{1}_{\{h_j(d^{(i)}) = h_j(d)\}})=\frac{1}{m}$, we can observe that the lemma holds.
\end{proof}

\begin{lem}
$$E\left[ \left(Z_j^{(i)}(d) \right)^2\right]=E[Z_j^{(i)}(d)].$$
\end{lem}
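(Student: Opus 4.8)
The plan is to prove the identity $E\left[(Z_j^{(i)}(d))^2\right] = E[Z_j^{(i)}(d)]$ by observing that $Z_j^{(i)}(d)$ is a product of indicator variables and a Bernoulli random variable, all of which take values in $\{0,1\}$. The key structural fact is that any $\{0,1\}$-valued random variable $W$ satisfies $W^2 = W$ pointwise, hence $E[W^2] = E[W]$. The entire task therefore reduces to verifying that $Z_j^{(i)}(d)$ is indeed $\{0,1\}$-valued.

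First I would inspect the definition of $Z_j^{(i)}(d)$ term by term. The outer factor $\mathbbm{1}_{\{J_i = j\}}$ is an indicator, so it lies in $\{0,1\}$. Inside the braces, the two summands are
\begin{equation*}
B\left(\mathbbm{1}_{\{d^{(i)}=d\}} + \mathbbm{1}_{\{d^{(i)}\neq d\}}\mathbbm{1}_{\{h_j(d^{(i)})=h_j(d)\}}\right) \quad\text{and}\quad Y\,\mathbbm{1}_{\{d^{(i)}\neq d\}}\mathbbm{1}_{\{h_j(d^{(i)})\neq h_j(d)\}}.
\end{equation*}
The crucial observation is that these two summands are mutually exclusive: the first carries the factor $\mathbbm{1}_{\{h_j(d^{(i)})=h_j(d)\}}$ (or $\mathbbm{1}_{\{d^{(i)}=d\}}$, which forces equal hashes) while the second carries the complementary factor $\mathbbm{1}_{\{h_j(d^{(i)})\neq h_j(d)\}}$, so at most one of them is nonzero for any realization. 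Moreover, within the first summand, the events $\{d^{(i)}=d\}$ and $\{d^{(i)}\neq d, h_j(d^{(i)})=h_j(d)\}$ are themselves disjoint, so the parenthesized quantity is itself an indicator. Since $B, Y \in \{0,1\}$, each summand is a product of $\{0,1\}$-valued factors and hence lies in $\{0,1\}$, and the sum of two mutually exclusive $\{0,1\}$-valued terms again lies in $\{0,1\}$. Multiplying by the outer indicator $\mathbbm{1}_{\{J_i=j\}}$ preserves this.

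Having established $Z_j^{(i)}(d) \in \{0,1\}$, I would conclude $\left(Z_j^{(i)}(d)\right)^2 = Z_j^{(i)}(d)$ as a pointwise identity of random variables, and take expectations of both sides to obtain the lemma. I do not expect any genuine obstacle here; the only subtlety worth stating explicitly is the mutual-exclusivity argument that rules out cross terms when squaring. Concretely, when squaring the bracketed expression one would naively produce a cross term $2BY\,\mathbbm{1}_{\{d^{(i)}\neq d\}}\mathbbm{1}_{\{h_j(d^{(i)})=h_j(d)\}}\mathbbm{1}_{\{h_j(d^{(i)})\neq h_j(d)\}}$, which vanishes identically because the last two indicators cannot be simultaneously $1$; this is the single computation worth recording to make the disjointness rigorous, and it is precisely the step that makes the lemma true.
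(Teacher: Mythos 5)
Your proof is correct, and it takes a cleaner route than the paper's. The paper proves the lemma computationally: it expands $\left(Z_j^{(i)}(d)\right)^2$, uses $B^2=B$, $Y^2=Y$, and the vanishing of cross terms implicitly, and then evaluates $E\left[\left(Z_j^{(i)}(d)\right)^2\right]$ explicitly to the closed form $\frac{p}{k}\mathbbm{1}_{\{d^{(i)}=d\}}+\left(\frac{p}{mk}+(1-\frac{1}{m})\frac{q}{k}\right)\mathbbm{1}_{\{d^{(i)}\neq d\}}$, which it then matches against the expression for $E[Z_j^{(i)}(d)]$ from the preceding lemma. You instead establish the pointwise identity $Z_j^{(i)}(d)\in\{0,1\}$, hence $\left(Z_j^{(i)}(d)\right)^2=Z_j^{(i)}(d)$ almost surely, and take expectations — no expectation ever needs to be computed, and the result does not depend on the specific distributions of $B$, $Y$, or the hash values, only on their $\{0,1\}$-valuedness. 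The one step that genuinely needs care is the one you isolate: the mutual exclusivity of the two summands via the complementary indicators $\mathbbm{1}_{\{h_j(d^{(i)})=h_j(d)\}}$ and $\mathbbm{1}_{\{h_j(d^{(i)})\neq h_j(d)\}}$ (together with the disjointness of $\{d^{(i)}=d\}$ and $\{d^{(i)}\neq d,\,h_j(d^{(i)})=h_j(d)\}$ inside the first summand), which kills the cross term $2BY\cdot 0$. Your argument is the more robust of the two — it would survive any change to $p$, $q$, or the hashing distribution — while the paper's version has the minor side benefit of re-deriving the explicit value of the second moment along the way, which it reuses when assembling the variance in the proof of Theorem 3.
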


\begin{proof}
  \begin{align*}
 &E\left[ \left(Z_j^{(i)}(d) \right)^2\right] \\
 =&\frac{1}{k}E[B^2 \mathbbm{1}_{\{d^{(i)} =d\}} +\mathbbm{1}_{\{d^{(i)} \neq d\}}(B\mathbbm{1}_{\{h_j(d^{(i)}) = h_j(d)\}}) \\ &+\mathbbm{1}_{\{h_j(d^{(i)}) \neq h_j(d)\}}) Y   )^2 ] \\
 =&\frac{p}{k}\mathbbm{1}_{\{d^{(i)} =d\}}+\left(\frac{p}{mk}+(1-\frac{1}{m})\frac{q}{k}\right)\mathbbm{1}_{\{d^{(i)} \neq d\}}\\
 =&E[Z_j^{(i)}(d)].
  \end{align*}  
\end{proof}

\begin{lem}
For any indices $i_1\neq i_2$, 
\begin{enumerate}
\item $Cov(Z_{j_1}^{(i_1)}(d), Z_{j_2}^{(i_2)}(d)) =0 $ when $j_1\neq j_2$; \\
\item $Cov(Z_{j}^{(i_1)}(d), Z_{j}^{(i_2)}(d)) =0$ when $d^{(i_1)}=d$ or $d^{(i_2)}=d$ or $d^{(i_1)} \neq d^{(i_2)}$; \\
\item $Cov(Z_{j}^{(i_1)}(d), Z_{j}^{(i_2)}(d)) =\left( \frac{p-q}{k} \right)^2 \left(\frac{1}{m}-\frac{1}{m^2} \right) $ when $d^{(i_1)}=d^{(i_2)}=d'$ and $d'\neq d$.
\end{enumerate}
\end{lem}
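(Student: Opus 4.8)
The plan is to prove the three covariance claims by directly expanding $\mathrm{Cov}(Z_{j_1}^{(i_1)}(d),Z_{j_2}^{(i_2)}(d))=E[Z_{j_1}^{(i_1)}(d)\,Z_{j_2}^{(i_2)}(d)]-E[Z_{j_1}^{(i_1)}(d)]\,E[Z_{j_2}^{(i_2)}(d)]$ and exploiting the independence structure of the random variables involved. The key observation is that each $Z_j^{(i)}(d)$ depends on three independent sources of randomness: the client's hash-function selection indicator $\mathbbm{1}_{\{J_i=j\}}$, the perturbation Bernoullis $B,Y$, and (through $h_j$) the hash collision events. Across \emph{distinct clients} $i_1\neq i_2$, all of these sources are drawn independently, so the only way a nonzero covariance can survive is if the two terms share a common randomness channel, namely the \emph{same hash function} acting on the \emph{same underlying data value}.

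First I would handle claim (1). When $j_1\neq j_2$, the indicator $\mathbbm{1}_{\{J_{i_1}=j_1\}}$ and $\mathbbm{1}_{\{J_{i_2}=j_2\}}$ are independent (each client picks its hash index independently), and for $i_1\neq i_2$ the perturbation variables and hash evaluations are independent as well. Hence $Z_{j_1}^{(i_1)}(d)$ and $Z_{j_2}^{(i_2)}(d)$ are products of independent factors, giving $E[Z_{j_1}^{(i_1)}Z_{j_2}^{(i_2)}]=E[Z_{j_1}^{(i_1)}]E[Z_{j_2}^{(i_2)}]$ and thus zero covariance. Next, for claim (2) with $j_1=j_2=j$, I would argue that whenever $d^{(i_1)}=d$ (or $d^{(i_2)}=d$), the corresponding $Z_j$ reduces to $\mathbbm{1}_{\{J=j\}}B$, whose randomness ($B$ and the index draw) is independent of the other client's variables; and when $d^{(i_1)}\neq d^{(i_2)}$, the two clients' hashed values $h_j(d^{(i_1)})$ and $h_j(d^{(i_2)})$ are (under the uniform-random-hash model) independent given the hash function, so again the expectations factor and the covariance vanishes.

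The substantive case is claim (3), where $d^{(i_1)}=d^{(i_2)}=d'\neq d$ and both clients happen to select the same hash function $j$. Here the shared randomness is the hash function $h_j$ itself: if $h_j(d')=h_j(d)$ (a collision, probability $1/m$), both $Z_j$ terms use the $B$-channel simultaneously, inducing positive correlation. I would compute $E[Z_j^{(i_1)}Z_j^{(i_2)}]$ conditioned on the hash-index agreement and the collision event. Writing $Z_j^{(i)}=\tfrac{1}{k}\mathbbm{1}_{\{J_i=j\}}(B^{(i)}\mathbbm{1}_{\{h_j(d')=h_j(d)\}}+Y^{(i)}\mathbbm{1}_{\{h_j(d')\neq h_j(d)\}})$ for these clients, the cross term expectation is $\tfrac{1}{k^2}\cdot\tfrac{1}{k}\cdot(p^2\cdot\tfrac{1}{m}+q^2\cdot(1-\tfrac{1}{m}))$ using that $B^{(i_1)},B^{(i_2)}$ are independent with mean $p$ and $Y^{(i_1)},Y^{(i_2)}$ independent with mean $q$, while the collision indicator is \emph{shared}. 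Subtracting the product of the single-client expectations $\bigl(\tfrac{p}{mk}+(1-\tfrac{1}{m})\tfrac{q}{k}\bigr)^2$ and simplifying should collapse to $\bigl(\tfrac{p-q}{k}\bigr)^2\bigl(\tfrac{1}{m}-\tfrac{1}{m^2}\bigr)$, i.e.\ the variance of the shared collision indicator $\mathbbm{1}_{\{h_j(d')=h_j(d)\}}$ scaled by $(p-q)^2/k^2$.

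I expect the main obstacle to be bookkeeping the joint distribution of the hash-index indicators together with the shared versus independent randomness: one must be careful that $\mathbbm{1}_{\{J_{i_1}=j\}}$ and $\mathbbm{1}_{\{J_{i_2}=j\}}$ are independent across clients (each contributing a factor $\tfrac{1}{k}$, so the pair contributes $\tfrac{1}{k^2}$ to the cross term, matching the single-client products), whereas the collision event is a \emph{single} random variable shared by both terms once the hash function is fixed. The cleanest route is to condition on the hash function $h_j$ (equivalently on the collision indicator), use conditional independence of the perturbation Bernoullis across the two distinct clients, and then take expectation over the collision event; the covariance then reduces transparently to $(p-q)^2/k^2$ times $\mathrm{Var}(\mathbbm{1}_{\{h_j(d')=h_j(d)\}})=\tfrac{1}{m}(1-\tfrac{1}{m})$.
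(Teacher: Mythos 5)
Your proposal is correct and follows essentially the same route as the paper's proof: claims (1) and (2) by independence of the hash-index draws, the perturbation Bernoullis, and the hash evaluations across distinct clients (with the paper verifying the $d^{(i_1)}=d$ subcase by explicitly factoring $E[Z_j^{(i_1)}Z_j^{(i_2)}]$), and claim (3) by computing the cross expectation $\frac{1}{k^2}\left(\frac{p^2}{m}+q^2\left(1-\frac{1}{m}\right)\right)$ and subtracting the product of the marginal means. Your closing observation --- conditioning on the shared collision indicator $I=\mathbbm{1}_{\{h_j(d')=h_j(d)\}}$ so that, by the law of total covariance, the covariance collapses to $\mathrm{Cov}\left(E[Z_j^{(i_1)}\mid I],E[Z_j^{(i_2)}\mid I]\right)=\frac{(p-q)^2}{k^2}\mathrm{Var}(I)$ --- is a cleaner derivation of case (3) than the paper's brute-force expansion, and it explains \emph{why} the answer is exactly $\frac{(p-q)^2}{k^2}\cdot\frac{1}{m}\left(1-\frac{1}{m}\right)$. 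One bookkeeping slip to fix: you rewrite $Z_j^{(i)}$ with a spurious $\frac{1}{k}$ prefactor (the paper's definition has none; the $\frac{1}{k}$ enters only through $E[\mathbbm{1}_{\{J_i=j\}}]$), and consequently your cross-term expectation carries a factor $\frac{1}{k^2}\cdot\frac{1}{k}$ where it should be $\frac{1}{k^2}$; this contradicts your own (correct) later remark that the pair of index indicators contributes $\frac{1}{k^2}$, matching the product of single-client means, and if taken literally it would not cancel to the stated result.
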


\begin{proof}
\textbf{Case 1: } When $j_1\neq j_2$, since the hash functions $h_{j_1}$ and $h_{j_2}$ are chosen independently, we have  $Cov(Z_{j_1}^{(i_1)}(d), Z_{j_2}^{(i_2)}(d)) =0 $.  
 
\noindent\textbf{Case 2: } When $j_1=j_2=j$, 
the calculation of covariance depends on the expectation of their product $E[Z_{j_1}^{(i_1)}(d)Z_{j_2}^{(i_2)}(d)]$.
Let $B_1, B_2$ (and $Y_1, Y_2$) denote the randomized response probabilities for $i_1,i_2$, respectively. 
We first consider when $d^{(i_1)}=d$, and we have
\begin{align*}
 &E[Z_{j_1}^{(i_1)}(d)Z_{j_2}^{(i_2)}(d)]  \\
 =& \frac{1}{k^2} E[ B_1(B_2(\mathbbm{1}_{\{d^{(i_2)} =d\}} +\mathbbm{1}_{\{d^{(i_2)} \neq d\}}\mathbbm{1}_{\{h_j(d^{(i_2)}) = h_j(d)\}}) \\  &+Y_2 \mathbbm{1}_{\{d^{(i_2)} \neq d\}}\mathbbm{1}_{\{h_j(d^{(i_2)}) \neq h_j(d)\}}   )  ]  \\
 =& \frac{p}{k^2}\left( p \mathbbm{1}_{\{d^{(i_2)} =d\}} + \left(\frac{p}{m} +(1-\frac{1}{m})q\right) \mathbbm{1}_{\{d^{(i_2)} \neq d\}}    \right)\\
 =& \frac{p}{k} E[Z_j^{(i_2)}(d)] \\
 =& E[Z_j^{(i_1)}(d)]E[Z_j^{(i_2)}(d)].
\end{align*}
Therefore,
\begin{align*}
 &Cov(Z_j^{(i_1)}(d), Z_j^{(i_2)}(d))\\
=&  E[Z_{j_1}^{(i_1)}(d)Z_{j_2}^{(i_2)}(d)]-E[Z_j^{(i_1)}(d)]E[Z_j^{(i_2)}(d)]\\
=& 0.
\end{align*}
Similarly, we also have $Cov(Z_j^{(i_1)}(d), Z_j^{(i_2)}(d))=0$ when $d^{(i_2)}=d$. Then it remains to consider the case when they both are not equal to $d$: $d^{(i_1)}\neq d$ and $d^{(i_2)}\neq d$. We further divide it into two cases.

\noindent\textbf{Case 3: } When $d^{(i_1)}\neq d^{(i_2)}$, all random variables $B_1,B_2,Y_1,Y_2, \mathbbm{1}_{\{d^{(i_2)} \neq d\}}, \mathbbm{1}_{\{h_j(d^{(i_1)}) \neq h_j(d)\}} , \mathbbm{1}_{\{h_j(d^{(i_2)}) \neq h_j(d)\}} $ are independent, so we have $Cov(Z_j^{(i_1)}(d), Z_j^{(i_2)}(d))=0$.

\noindent\textbf{Case 4: } When $d^{(i_1)}= d^{(i_2)}=d'$ and $d'\neq d$, we have 
\begin{align*}
&E[Z_{j_1}^{(i_1)}(d)Z_{j_2}^{(i_2)}(d)]  \\
 =& \frac{1}{k^2}E[ B_1 B_2 \mathbbm{1}^2_{\{h_j(d') = h_j(d)\}}) + Y_1Y_2\mathbbm{1}^2_{\{h_j(d') \neq h_j(d)\}} ]\\
 = &\frac{1}{k^2}\left( \frac{p^2}{m} + q^2(1-\frac{1}{m}\right),
\end{align*}
and by Lemma \ref{lem.exp}, 
\begin{align*}
E[Z_j^{(i_1)}(d)]E[Z_j^{(i_2)}(d)] =\frac{1}{k^2}\left( \frac{p}{m} + q(1-\frac{1}{m})\right)^2.
\end{align*}
Then,
\begin{align*}
&Cov(Z_j^{(i_1)}(d), Z_j^{(i_2)}(d))\\
=&  E[Z_{j_1}^{(i_1)}(d)Z_{j_2}^{(i_2)}(d)]-E[Z_j^{(i_1)}(d)]E[Z_j^{(i_2)}(d)]\\
=&  
\left(\frac{p-q}{k}\right)^2 \frac{1}{m}(1-\frac{1}{m}).
\end{align*}

\end{proof}

We are now ready to prove Theorem 3.
\begin{proof}
By Lemma 2, we have

\begin{align*}
    &E[C(d)]\\ 
    =&\sum_{j=1}^k \sum_{i=1}^n E\left[ Z_j^{(i)}(d)\right] \\
=& f(d)\left(1-\frac{1}{m}\right)(p-q) +\frac{pn}{m} +\left(1-\frac{1}{m}\right)nq.
\end{align*}

Therefore, 
\begin{equation*}
    E[\hat{f}(d)] = \frac{E[C(d)]-\frac{pn}{m}-qn\left(1 -\frac{1}{m}\right)}{(p-q)\left(1 -\frac{1}{m}\right)} = f(d).
\end{equation*}

We can decompose the variance as follows.
\begin{align}
&Var(C(d)) \notag\\  
=&Var\left(\sum_{j=1}^k\sum_{i=1}^nZ_j^{(i)}(d)\right) \notag \\
=&\sum_{i=1}^n Var(\sum_{j=1}^k Z_j^{(i)}(d)) + \sum_{i_1\neq i_2} Cov(Z_{j_1}^{(i_1)}(d), Z_{j_2}^{(i_2)}(d)) \notag \\
=&\sum_{i=1}^n \left(\sum_{j=1}^k Var(Z_j^{(i)}(d))   + \sum_{j_1\neq j_2} Cov(Z_{j_1}^{(i)}(d), Z_{j_2}^{(i)}(d))    \right) \notag \\
&+\sum_{i_1\neq i_2}\sum_{j=1}^k Cov(Z_{j}^{(i_1)}(d), Z_{j}^{(i_2)}(d)), \label{eq.var}
\end{align}
where \eqref{eq.var} follows from Lemma 4(1).

Again, by Lemma 4 (1), we need only consider $\sum_{i=1}^n \sum_{j=1}^k Var(Z_j^{(i)}(d)) $ for the first term in \eqref{eq.var}:
\begin{align}
   &\sum_{i=1}^n \sum_{j=1}^k Var(Z_j^{(i)}(d)) \notag \\
=&\sum_{i=1}^n \sum_{j=1}^k E\left[ \left(Z_j^{(i)}(d) \right)^2\right] - E^2[Z_j^{(i)}(d)] \notag \\
=&pf(d)+\left(\frac{p}{m}+(1-\frac{1}{m})q\right)(n-f(d)) \notag \notag \\
&-\frac{p^2}{k}f(d)+ \frac{1}{k}\left(\frac{p}{m}+(1-\frac{1}{m})q\right)^2(n-f(d)) \label{eq.part1}
\end{align}

Then we turn to consider the second term in \eqref{eq.var}. By Lemma 4 (2) and (3), we need only consider the indices where $d^{(i_1)}=d^{(i_2)}=d^*$ and $d^*\neq d$.
\begin{align}
&\sum_{i_1\neq i_2}\sum_{j=1}^k Cov(Z_{j}^{(i_1)}(d), Z_{j}^{(i_2)}(d)) \notag\\
=&k\left(\frac{p-q}{k}\right)^2 \frac{1}{m}(1-\frac{1}{m})\sum_{d^*\neq d} \sum_{i_1:d^{(i_1)}=d^*}\sum_{i_2\neq i_1} \mathbbm{1}_{\{d^{(i_2)} =d^*\}}   \notag \\
=&k\left(\frac{p-q}{k}\right)^2 \frac{1}{m}(1-\frac{1}{m})\sum_{d^*\neq d} \sum_{i_1:d^{(i)}=d^*} (f(d^*)-1) \notag\\
=&k\left(\frac{p-q}{k}\right)^2 \frac{1}{m}(1-\frac{1}{m})\sum_{d^*\neq d} (f(d^*)-1)f(d^*) \notag\\
=&k\left(\frac{p-q}{k}\right)^2 \frac{1}{m}(1-\frac{1}{m}) \left( \sum_{d^*\neq d}f(d^*)^2-(n-f(d)) \right) \label{eq.part2}
\end{align}
Combining \eqref{eq.part1} and \eqref{eq.part2}, we have
\begin{align*}
&Var(C(d)) \\  
=&f(d)(p-\frac{p^2}{k})+(n-f(d))\left(\frac{p^2}{mk}+\frac{q^2}{k} (1-\frac{1}{m}) \right) \\
&+\frac{(p-q)^2}{km} (1-\frac{1}{m}) \left( \sum_{d^*\neq d}f(d^*)^2 \right).
\end{align*}
By Proposition 1, we have the following relation between $\text{Var}(\hat{f}(d))$  and $Var(C(d))$:
$$\text{Var}(\hat{f}(d))=\frac{Var(C(d))}{(p-q)^2(1-1/m)^2},$$ completing the proof.

\end{proof}

\section{Proof of lemma 1}
In this section, we prove the result in lemma 1, which shows the optimal $p$ can be derived from the following optimization problem:
\begin{equation*}
    p = \text{argmin}\left\{ \frac{kw-p +\lambda{(w-1)(kw-p-wp)}}{k(1-p)^2p}\right\},
\end{equation*}

We start from simplifying variables defined in the expression of the variance.
\begin{equation*}
    \begin{aligned}
\alpha_2 &= \frac{\left( p - \frac{p(m - e^{\epsilon}(1 - p) - p)}{(e^{\epsilon} (1 - p) + p)(m - 1)} \right)^2}{km} \left(1 - \frac{1}{m}\right) \\[10pt]
&=\frac{p^2(w - 1)^2}{w^2k(m-1)},
\end{aligned}
\end{equation*}

and 
\begin{equation*}
\begin{aligned}
    \alpha_1 = &
\frac{p}{m} + \left(1 - \frac{1}{m}\right) \frac{p(m - e^{\epsilon}(1 - p) - p)}{(e^{\epsilon} (1 - p) + p)(m - 1)} \\
&= \frac{p}{m} + \frac{p(m - e^{\epsilon}(1 - p) - p)}{m (e^{\epsilon} (1 - p) + p)} \\[10pt]
&= \frac{p m}{m (e^{\epsilon} (1 - p) + p)} \\[10pt]
&= \frac{p}{e^{\epsilon} (1 - p) + p} = \frac{p}{w}.
\end{aligned}
\end{equation*}

Also 
{
\begin{equation*}
    (p-q)^2 (1-1/m) = \alpha_2 k(m-1)=\frac{p^2(w-1)^2}{w^2}.
\end{equation*}
}
Then minimizing $Var[\hat{f}(d)]$ is equivalent to minimize
\begin{equation}\label{eq:minmize}
\begin{aligned}
\frac{w^2}{(w-1)^2p}\left[f(d)(1-\frac{p}{k}) + (n-f(d)) (\frac{1}{w} - \frac{p}{w^2k})\right]
\end{aligned}
\end{equation}

As $w-1 = (e^{\epsilon}-1)(1-p)$, and $e^{\epsilon}-1 >0$, \eqref{eq:minmize} is equivalent to
\begin{equation*}\label{eq:6}
    \begin{aligned}
    &\frac{w^2}{(1-p)^2p}\left[f(d)(1-\frac{p}{k}) + (n-f(d)) (\frac{1}{w} - \frac{p}{w^2k})\right]\\
    =&\frac{f(d)\left({kw^2 - w^2p}\right) + (n-f(d)) \left({kw-p}\right)}{k(1-p)^2p}\\
    =&\frac{n\left({kw-p}\right)}{k(1-p)^2p} + \frac{f(d)\left({kw^2 - w^2p} - {kw+p}\right)}{k(1-p)^2p}\\
    =&\frac{n\left({kw-p}\right)}{k(1-p)^2p} + \frac{f(d)\left({kw(w-1)+p(1-w)(1+w)}\right)}{k(1-p)^2p}\\
    =&n\left\{\frac{\left({kw-p}\right)}{k(1-p)^2p} + \frac{\lambda{(w-1)(kw-p-wp)}}{k(1-p)^2p}\right\}.
    \end{aligned}
\end{equation*}
This completes the proof for Lemma 1.
\end{document}